\documentclass[11pt]{article}
\usepackage[T2A]{fontenc}
\usepackage[cp1251]{inputenc}
\usepackage[english]{babel}
\usepackage{amssymb}
\usepackage{amsmath}
\usepackage{amsthm}
\usepackage{amsfonts}
\usepackage{amssymb}

\newtheorem{theorem}{Theorem}

\newtheorem{proposition}{Proposition}

\newtheorem{definition}{Definition}

\newtheorem{example}{Example}
\newtheorem{remark}{Remark}

\begin{document}
\begin{center}
\textbf{\uppercase{L\'evy Differential Operators and Gauge Invariant Equations for Dirac and Higgs Fields}}\\
Boris O. Volkov\\

Steklov Mathematical Institute of Russian Academy of Sciences,\\ ul. Gubkina 8, Moscow, 119991 Russia\\E-mail: borisvolkov1986@gmail.com

\end{center}

Dedicated to the memory of Sergei Starodubov
\\

{\bf Abstract:} 
We study the L\'evy infinite-dimensional differential operators (differential operators defined  by the analogy with the L\'evy Laplacian) and their relationship to the Yang-Mills equations. We consider the parallel transport on the space of curves as an infinite-dimensional analogue of chiral fields and show  that it is a solution to the system of differential equations  if and only if the associated connection is a solution to the Yang-Mills equations.
This system is an analogue of the equation of  motion of chiral fields and contains the L\'evy divergence. The systems of infinite-dimensional equations containing  L\'evy differential operators, that are equivalent to the Yang-Mills-Higgs equations and the Yang-Mills-Dirac equations (the equations of quantum chromodynamics), are obtained.  
The equivalence of two ways to define L\'evy differential operators is shown.

keywords: L\'evy Laplacian,  Yang-Mills equations, Yang-Mills-Higgs equations, QCD equations

AMS Subject Classification: 70S15

%46N20 

 \section*{Introduction}
 
The current paper has  two goals. We provide a classification of infinite-dimensional differential operators defined by the analogy with the L\'evy Laplacian and naturally connected to the Yang-Mills equations. And we describe by such differential operators  solutions to the  
Yang-Mills-Higgs equations and  the QCD equations (the equations of quantum chromodynamics).

In the current paper  we will use the following scheme of the definition of differential operators, and not the form in which they were originally introduced.
 Let $E$  be a real normed space  and  $S$  be  a linear  functional acting on  a subspace of  $L(E,E^\ast)$.\footnote{If $X$ and $Y$ are normed spaces, the symbol $L(X,Y)$ denotes the space of linear continuous operators from $X$ to $Y$. The symbol  $C^k(X,Y)$ denotes the space of $k$-times Fr\'echet differentiable functions  from $X$ to $Y$.}  Then $S$ defines a linear differential operator  of the second order $D^{2,S}$
  acting on   $f\in C^2(E,\mathbb{R})$
by the formula
$$
D^{2,S}f(x)=S(f''(x)),
$$
and the  linear differential  operator of the first order  $D^{1,S}$ acting on
$B\in C^1(E,E^\ast)$  by the formula
$$
D^{1,S}B(x)=S(B'(x)).
$$
 For example, if we choose $E=\mathbb{R}^d$ and  $S=tr$ (trace), then
   $D^{2,tr}$ is the Laplacian  $\Delta$ and  
    $D^{1,tr}$ is the divergence $div$. 
Now let $E$ be a real normed space continuously embedded in a real separable Hilbert space  $H$. Let $\{e_n\}$ be an orthonormal basis in  $H$ that consists of elements of  $E$.
 Then the value of the L\'evy trace $tr_L^{\{e_n\}}$ (generated by the orthonormal basis $\{e_n\}$)  on %$K\in L(L_2(0,1),L_2(0,1))$ 
$K\in L(E,E^\ast)$ is defined by
 \begin{equation}
  \label{eq2}
 tr^{\{e_n\}}_L K=\lim_{n\to\infty}\frac 1n \sum_{k=1}^n <Ke_k, e_k>.
 \end{equation}
The L\'evy Laplacian (generated by the orthonormal basis $\{e_n\}$) $\Delta_L^{\{e_n\}}$ is the differential operator  $D^{2,tr^{\{e_n\}}_L}$.
This operator was introduced by Paul L\'evy for the case $E=H=L_2(0,1)$ (see Ref.~\cite{L1951}).

Another original definition of the L\'evy Laplacian from Ref.~\cite{L1951} (the definition of the L\'evy Laplacian by  means of a second-order derivative of special form) will be given in terms of the L\'evy trace  as follows.
Let  $K$ be a bilinear functional on $L_2(0,1)$ such that for all $u,v\in L_2(0,1)$ the following equality holds
 \begin{equation}
 \label{1}
 K(u,v)=\int_0^1\int_0^1 K_V(t,s) u(t)v(s)dtds+\int_0^1 K_L(t)u(t)v(t)dt,
 \end{equation}
where $K_V\in L_2([0,1]\times[0,1])$ and $K_L\in L_\infty[0,1]$. 
The value of the  L\'evy trace   $tr_L$ on $K$ is defined by
$$
tr_LK=\int_0^1 K_L(t)dt.
$$
Then the L\'evy Laplacian  $\Delta_L$ is the differential operator $D^{2,tr_L}$. 
The orthonormal bases $\{e_n\}$ in $L_2(0,1)$, such that   $tr_L$ is a restriction of $tr^{\{e_n\}}_L$,
form a special class of weakly uniformly dense  bases. Hence for weakly uniformly dense  bases
 $\Delta_L$ is a restriction of   $\Delta^{\{e_n\}}_L$.

The article deals with functionals which are generalizations of the traces described above.  Using uniformly dense bases it is proved that these functionals  coincide under certain conditions.
Such functionals will be also called the L\'evy traces. 
 We consider differential operators of the first and the second order (Laplacian, d'Alembertian, divergence) generated by these traces  arising in the study of gauge fields. Such operators will be called the L\'evy differential operators.

The following papers are devoted to the connection between the L\'evy  differential operators and the Yang-Mills 
equations. In Ref.~\cite{AV1981} by Aref'ieva and Volovich the functional divergence was introduced by the analogy with the L\'evy Laplacian defined  by  means of a second-order derivative of special form. The  relationship of this divergence with gauge fields were studied. In the current paper it is shown that the functional divergence is a restriction of the L\'evy divergence.
In Refs.~\cite{AGV1993,AGV1994} Accardi, Gibilisco and Volovich introduced an analogue of the L\'evy Laplacian acting on functions on the space of piecewise smooth functions (see also Ref.~\cite{Accardi}). This operator was defined by a more complicated  form of the second derivative than in~(\ref{1}). 
 It is also called the L\'evy Laplacian.  In Refs.~\cite{AGV1993,AGV1994} the following  was proved. The connection in the trivial bundle over a Euclidean space is a solution to the Yang-Mills equations if and only if the parallel transport, considered as an operator-valued functional on a space of curves, is a solution to  the Laplace equation for such L\'evy Laplacian.
Moreover  in Refs.~\cite{AGV1993,AGV1994} the L\'evy d'Alembertian was introduced. The similar theorem  about the equivalence of the Yang-Mills equations for a connection on a Minkowski space and the L\'evy-d'Alembert equation for the parallel transport holds for this operator. The case of  a Riemannian
manifold  and the stochastic case were considered by Leandre and Volovich in Ref.~\cite{LV2001}.
In Refs.~\cite{Volkov2012,Volkovdiss} the theorem about the connection between the L\'evy Laplacian and the gauge fields was proved in the case of a manifold  and the L\'evy Laplacian and the L\'evy  d'Alembertian defined as the C\'esaro mean of the second directional derivatives (see also Ref.~\cite{AS2006} for the definition). Thus, the problem has been  solved, set in Ref.~\cite{AGV1993}. In Refs.~\cite{VolkovD,VolkovLLI} it was proved that the  L\'evy Laplacians and L\'evy d'Alembertians defined as the  Ces\`aro mean of the second directional derivatives and defined by a special form of the second derivative coincide in the plane case.  The instantons were described in terms of  the L\'evy Laplacian and the parallel transport in Ref.~\cite{VolkovLLI}. In Refs.~\cite{Volkov2017,Volkov2018} the Levy Laplacian defined as the Chesaro mean of the directional derivatives in the stochastic case was studied.\footnote{Note that the different approach to the Yang-Mills equations based on the stochastic parallel transport but without using the L\'evy differential operators was considered in Refs.~\cite{Bauer1998,ABT,AT}.} It was shown that, unlike the deterministic case, the equivalence of the Yang-Mills equations and the Levy-Laplace equation is not valid for such Laplacian. It means that this Levy Laplacian does not coincide with the Levy Laplacian introduced   in Ref.~\cite{LV2001}.
 
  The path 2-form is a function  on the space of curves in $\mathbb{R}^d$ beginning at the origin such  that the value of this function on a curve  is a matrix-valued 2-form at the endpoint of the curve. 
In  Ref.~\cite{Gross} by Gross   the infinite-dimensional non-commutative Poincar\'e lemma was proved. It implies that an infinite-dimensional 1-form  (a smooth function from the space of curves to the space of linear matrix-valued operators on the space of curves) is generated by the path 2-form and  is closed if and only if  it is generated by a parallel transport associated with a  connection in the trivial bundle over $\mathbb{R}^d$. Using this non-commutative Poincar\'e lemma in Ref.~\cite{Gross} Gross  proved that the differential equations on the path 2-form, that are generalization of Maxwell's equations, are  equivalent to the Yang-Mills equations.
 Also in  Ref.~\cite{Gross}  the differential equations on the path 2-forms and the infinite-dimensional fields,  that are equivalent to the QCD equations and  the Yang-Mills-Higgs equations,  were obtained.

 In the current  paper we show  that     an infinite-dimensional  1-form generated by the path 2-form  is closed and the value of the  L\'evy divergence on this 1-form is zero if and only if it is associated with a finite-dimensional connection, that is a solution to the Yang-Mills equations.\footnote{The converse statement was proved in Ref.~\cite {AV1981} for the functional divergence.} Thus, we obtain the differential equations that are analogue of the equations of motion of the chiral field and are equivalent to the Yang-Mills equations (the parallel transport can be considered as an analogue  of the chiral field).
 In addition, we  study the connection between the  L\'evy infinite-dimensional differential operators and Yang-Mills-Higgs equations and the  QCD equations. For this purpose  the equations derived  in Ref.~\cite{Gross} are  used.
 But in contrast to our approach the L\'evy differential operators were not  considered in Ref.~\cite{Gross}. 
  We  generalize the Accardi-Gibilisco-Volovich theorem and reformulate the Yang-Mills-Higgs equations and the  QCD equations in terms of the parallel transport and the L\'evy d'Alembertian. Using the non-commutative Poincar\'e lemma  we  reformulate
  the Yang-Mills-Higgs equations and the  QCD equations in terms of the infinite-dimensional 1-form and the L\'evy divergence.
   The relationship of the L\'evy d'Alembertian  and the QCD equations was considered in Ref.~\cite{VolkovD}.

The paper is organized as follows. The first section provides the definition of the L\'evy  traces as the Ces\`aro mean of diagonal elements and differential operators  generated by these traces. The second section provides the definition of the L\'evy traces as an integral functional and associated  differential operators.
Also in the second section the equivalence of two ways to define L\'evy differential operators is shown. The connection between the L\'evy differential operators and the Yang-Mills equations is discussed in the third section. In the fourth and the fifth sections  systems of infinite-dimensional differential equations, containing  L\'evy differential  operators, which are equivalent to the Yang-Mills-Higgs equations and to the QCD equations  respectively, are obtained.

\section{L\'evy trace and L\'evy Differential  Operators}
\label{sec1}

Let us recall  the general scheme of the definition of  homogeneous linear differential
operators  from the  paper Ref.~\cite{ASF} (see also Ref.~\cite{S}).
Let $X,Y,Z$ be real normed vector spaces.  Let
$C^n(X,Y)$ be the space of
 $n$ times Frech\'et differentiable functions from $X$  to $Y$.
Then for any $x\in X$ it is hold  that $f^{(n)}(x)\in L_n(X,Y)$,
where the space $L_n(X,Y)$ is defined by induction:
$L_1(X,Y)=L(X,Y)$ and $L_n(X,Y)=L(X,L_{n-1}(X,Y))$.
Let $S$ be a linear operator from $dom S$ to $Z$, where $dom S\subset L_n(X,Y)$.
The domain $dom D^{n,S}$ of the differential operator  $D^{n,S}$ of order $n$  generated by the linear operator  $S$ consists of all $f\in C^n(X,Y)$ such that for all $x\in E$  holds
  $f^{(n)}(x)\in dom S$.
Then the value of $D^{n,S}$ on $f\in dom D^{n,S}$ is defined by the following formula
 $$
 D^{n,S} f(x)=S(f^{(n)}(x)).
 $$
If we choose $X=\mathbb{R}^d$, $Y=Z=\mathbb{R}$ and $S=tr$, then
 $D^{2,tr}$ is the Laplace operator $\Delta$. If we choose $X=Y=\mathbb{R}^d$,
  $Z=\mathbb{R}$ and $S=tr$, then 
 $D^{1,tr}$ is the divergence $div$ acting on the space of $C^1$-smooth 1-forms.

Let $E_1$ be a real normed space. 
 Let $E_1$ be continuously embedded  in a real separable  infinite-dimensional Hilbert space   $H_1$ in such  a way that the image of  $E_1$ is dense in $H_1$.
 Set $E=\mathbb{R}^d\otimes E_1$ and $H=\mathbb{R}^d\otimes H_1$.\footnote{We assume that the topology of the tensor product of $\mathbb{R}^d\otimes E_1$  is defined by some cross-norm (see Ref.~\cite{sch}). We assume that $H=\mathbb{R}^d\otimes H_1$ is a  Hilbert tensor product.}
Then $E\subset H\subset E^\ast$ is a rigged Hilbert space. Let  $\{e_n\}$ be an orthonormal basis in  $H_1$ that consists of elements of
$E_1$.  Let $\{p_1,\ldots,p_d\}$ be an orthonormal basis in  $\mathbb{R}^d$.
Denote  the Euclidean metric on  $\mathbb{R}^d$ by $\delta=(\delta_{\mu\nu})$. Also let the Minkowski metric  $\eta=(\eta_{\nu\mu})$ be
defined on $\mathbb{R}^d$.
 We assume  that this metric    is diagonal $\{\eta_{\nu\mu}\}=diag\{1,-1,\ldots,-1\}$ in the basis $\{p_1,\ldots,p_d\}$. Everywhere below $g\in\{\delta,\eta\}$.

 \begin{definition}
The L\'evy trace $tr^{\{e_n\},g}_{L}$ generated by the orthonormal basis $\{e_n\}$  and the metric $g$  is a linear functional on $dom\,tr^{\{e_n\},g}_{L}$ defined by
 \begin{multline}
 \label{tlL1}
 tr^{\{e_n\},g}_{L}(T)=\\=\lim_{n\to\infty}\frac 1{n}\sum_{k=1}^n(<T(p_1\otimes e_k),p_1\otimes e_k>+\sum_{\mu=2}^d\sigma_g^2<T(p_\mu\otimes e_k),p_\mu\otimes e_k>),
 \end{multline}
 where $\sigma_g=1$ or $\sigma_g=i$, if  $g=\delta$ and $g=\eta$ respectively, and $dom\,tr^{\{e_n\},g}_{L}$ consists of all $T\in L(E,E^\ast)$ for which the right-hand side of~(\ref{tlL1}) exists.  
 \end{definition}

\begin{definition}
Choose $X=E$ and $Y=Z=\mathbb{R}$. The L\'evy Laplacian  $\Delta^{\{e_n\}}_L$ generated by the orthonormal basis $\{e_n\}$ is the differential operator $D^{2,tr_L^{\{e_n\},\delta}}$. The L\'evy d'Alembertian $\Box^{\{e_n\}}_L$  generated by the orthonormal basis  $\{e_n\}$  is the differential operator $D^{2,tr_L^{\{e_n\},\eta}}$.
\end{definition}

 \begin{definition}
Choose $X=E$, $Z=\mathbb{R}$  and $Y=E^\ast$.
In the Euclidean case the L\'evy divergence $div^\delta_L$ generated by the orthonormal basis $\{e_n\}$ is the operator $D^{1,tr_L^{\{e_n\},\delta}}$. In the Minkowski case the L\'evy divergence $div^\eta_L$ generated by the orthonormal basis $\{e_n\}$ is the operator $D^{1,tr_L^{\{e_n\},\eta}}$.
\end{definition}

\begin{remark} 
The L\'evy trace was  introduced for  $d=1$ in Ref.~\cite{AS1993}. 
\end{remark}
\begin{remark}
The non-classical L\'evy Laplacian is a generalization of the L\'evy Laplacian defined 
in the following way. 
Let $R$ be a linear mapping from\newline$span\{e_n\colon n\in \mathbb{N}\}$ to $E_1$. 
The non-classical L\'evy trace $tr^{\{e_n\},g}_{LR}$ generated by the orthonormal basis  $\{e_n\}$,  the metric  $g$  and the linear operator $R$ acts on $T\in L(E,E^\ast)$ by the formula
 \begin{multline}
 \label{tlLLL1}
 tr^{\{e_n\},g}_{LR}(T)=\lim_{n\to\infty}\frac 1{n}\sum_{k=1}^n(<T(p_1\otimes Re_k),p_1\otimes Re_k>+\\+\sum_{\mu=2}^d\sigma_g^2<T(p_\mu\otimes Re_k),p_\mu\otimes Re_k>),
 \end{multline}
 where $\sigma_g=1$ or $\sigma_g=i$, if $g=\delta$ or $g=\eta$ respectively. The non-classical L\'evy Laplacian  $\Delta^{\{e_n\}}_{LR}$ and the non-classical L\'evy d'Alembertian  $\Box^{\{e_n\}}_{LR}$ are the differential operators  $D^{2,tr_{LR}^{\{e_n\},\delta}}$ and $D^{2,tr_{LR}^{\{e_n\},\eta}}$ respectively.  The non-classical L\'evy divergence  $div^g_{LR}$  is the differential operator $D^{1,tr^g_{LR}}$.
The non-classical  L\'evy trace and the associated non-classical L\'evy Laplacian were introduced for $d=1$ in Ref.~\cite{AccSm2007}. As we show below, such operators can be useful in the study of gauge fields (see also~\cite{Volkov2018}). 
Moreover, the exotic L\'evy Laplacians (see Refs.~\cite{AS1993,AS2009} for the definition), which are studied  in the white noise analysis  (see Ref.~\cite{AJS2015} and references therein),  can be represented as   non-classical L\'evy Laplacians (see Refs.~\cite{Volkov2013,VolkovD}).
\end{remark}

\section{L\'evy trace as an integral functional}
\label{sec2}

Let   $$W^{1,p}_0([0,1],\mathbb{R}^{d})=\{\text{$\sigma$ is absolutely continuous, $\sigma(0)=0$, $\dot{\sigma}\in L_p((0,1),\mathbb{R}^d)$}\}$$ 
and $C_0^1([0,1],\mathbb{R}^d)=\{\sigma\in C^1([0,1],\mathbb{R}^d) ,\sigma(0)=0\}$.

Below, unless specifically stated,
$$E\in\{C_0^1([0,1],\mathbb{R}^d),W^{1,1}_0([0,1],\mathbb{R}^{d}),
W^{1,2}_0([0,1],\mathbb{R}^{d})\}.$$
Denote by $E_0$ the space of all $\sigma\in E$ such that $\sigma(1)=0$.

Denote by $T^2_E$ the space of continuous  bilinear functionals on $E\times E$, which restriction on   $E_0\times E_0$ have the following form
\begin{multline}
\label{f''}
Q(u,v)=\int_0^1\int_0^1Q^V_{\mu\nu}(t,s)u^{\mu}(t)v^{\nu}(s)dtds+\\
+\int_0^1Q^L_{\mu\nu}(t)u^{\mu}(t)v^{\nu}(t)dt
%+v^{\mu}(r)u^{\nu}(r))dr+\\
+\frac 12\int_0^1Q^S_{\mu\nu}(t)(\dot{u}^{\mu}(t)v^{\nu}(t)+\dot{v}^{\mu}(t)u^{\nu}(t))dt,%\, u,v\in E_0
\end{multline}
where
$Q^V_{\mu\nu}\in L_1([0,1]\times[0,1],\mathbb{R})$,
$Q^L_{\mu\nu}\in L_1([0,1],\mathbb{R})$,
$Q^S_{\mu\nu}\in L_\infty([0,1],\mathbb{R})$, $Q^L_{\mu\nu}$  is a symmetric tensor:
$Q^L_{\mu\nu}=Q^L_{\nu\mu}$,
and $Q^S_{\mu\nu}$ is an anti-symmetric tensor:
$
Q^S_{\mu\nu}=-Q^S_{\nu\mu}$.
\begin{definition}
The L\'evy trace $tr^g_L$ generated by the metric $g$ is a linear functional on $T^2_E$ defined by the formula
$$
tr^g_L Q=\int_0^1Q^L_{\mu\nu}(t)g^{\mu\nu}dt.
$$
\end{definition}

\begin{remark}
The first term in~(\ref{f''}) is the Volterra part, the second term is the L\'evy part, the third term is a singular part.
\end{remark}
\begin{definition}
Choose $X=E$, $Z=Y=\mathbb{R}$.
The L\'evy Laplacian  $\Delta_L$ is the differential operator $D^2_{tr^\delta_L}$. The  L\'evy d'Alembertian 
$\Box_L$ is the differential operator $D^2_{tr^\eta_L}$.
\end{definition}

\begin{definition}
Choose $X=E$, $Z=\mathbb{R}$ and $Y=E^\ast$.
In the Euclidean case the L\'evy divergence $div^\delta_L$  is the operator $D^1_{tr^\delta_L}$. In the Minkowski case the L\'evy divergence $div^\eta_L$ is the differential operator $D^1_{tr^\eta_L}$.
\end{definition}

The definition of the functional divergence by Aref'ieva and Volovich (see Ref.~\cite{AV1981})  can be reformulated as follows. Let $E=C^1([0,1],\mathbb{R}^d)$ and let  $T^2_{v}$
be the space of $Q\in T^2_E$ such that $Q_{\mu\mu}\in C([0,1]\times [0,1])$ for any $\mu\in\{1,\ldots,d\}$   and
$\sum_{\mu=1}^d\int_0^1Q^V_{\mu\mu}(t,t)dt=0$.
Let $tr^\delta_{v}$ be the restriction of $tr^\delta_{L}$ on the space $T^2_{v}$. The functional divergence is the differential operator $D^{1,tr^\delta_{v}}$. Thus, the functional divergence is a restriction of the L\'evy divergence.

Recall the following definition from Ref.~\cite{L1951} (see also Refs.~\cite{F2005,KOS}).

\begin{definition} An orthonormal basis $\{e_n\}$ in $L_2(0,1)$
is weakly uniformly dense (or equally dense) if 
\begin{equation}
\label{unifdence}
\lim_{n\to\infty}\int_0^1 h(t)(\frac 1n\sum_{k=1}^n
e_k^2(t)-1)dt=0
\end{equation}
for any $h\in L_{\infty}[0,1]$.
\end{definition}

\begin{example}
The sequence $e_n(t)=\sqrt{2}\sin{n\pi t}$ is a weakly uniformly dense basis  in 
 $L_2(0,1)$.
\end{example}

\begin{theorem}
\label{p}
Let  $\{e_n\}$ be a weakly uniformly dense  orthonormal basis in $L_2(0,1)$ that consists of functions from $E$ such that
$e_n(1)=0$ for all $n\in\mathbb{N}$. Let  the sequence $\{e_n\}$ be   uniformly bounded. 
If $Q\in T^2_E$, 
then
\begin{equation}
\label{tr=tr}
tr^{g,\{e_n\}}_L(Q)=tr_L^{g}(Q)
\end{equation}
\end{theorem}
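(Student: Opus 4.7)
The plan is to evaluate the Cesàro mean defining $tr^{\{e_n\},g}_L(Q)$ term by term using the decomposition in~(\ref{f''}), which is available precisely because the hypothesis $e_k(1)=0$ ensures that every vector $p_\mu\otimes e_k$ lies in $E_0$. For the diagonal values needed,
$$Q(p_\mu\otimes e_k,p_\mu\otimes e_k)=\int_0^1\int_0^1 Q^V_{\mu\mu}(t,s)e_k(t)e_k(s)\,dt\,ds+\int_0^1 Q^L_{\mu\mu}(t)e_k(t)^2\,dt,$$
since the antisymmetry $Q^S_{\mu\mu}=-Q^S_{\mu\mu}$ kills the singular part on the diagonal. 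So the proof reduces to two convergence statements for the L\'evy part and the Volterra part respectively, after which the identity $tr^{\{e_n\},g}_L(Q)=tr^g_L(Q)$ follows by summing over $\mu$ weighted by $1$ or $\sigma_g^2$ and recognizing $\int Q^L_{\mu\nu}(t)g^{\mu\nu}dt$.

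For the L\'evy part I would rewrite
$$\frac 1n\sum_{k=1}^n\int_0^1 Q^L_{\mu\mu}(t)e_k(t)^2\,dt-\int_0^1 Q^L_{\mu\mu}(t)\,dt=\int_0^1 Q^L_{\mu\mu}(t)\Bigl(\frac 1n\sum_{k=1}^n e_k^2(t)-1\Bigr)dt.$$
The weakly uniformly dense property delivers this limit for $L_\infty$ integrands, so the task is to extend it to $Q^L_{\mu\mu}\in L_1$. Here the uniform bound $|e_k|\le M$ is essential: the bracketed factor is dominated in $L_\infty$ by $M^2+1$, so for any $\varepsilon>0$ I approximate $Q^L_{\mu\mu}$ in $L_1$-norm by some $h_\varepsilon\in L_\infty$, apply the weakly uniformly dense condition to $h_\varepsilon$, and control the remainder by $(M^2+1)\|Q^L_{\mu\mu}-h_\varepsilon\|_{L_1}$.

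The Volterra part is the main difficulty, because the kernel $Q^V_{\mu\mu}$ only sits in $L_1([0,1]^2)$ while a slick proof via a basis expansion is only available in $L_2$. I plan to handle this by the same $L_1$--$L_\infty$ splitting trick: approximate $Q^V_{\mu\mu}$ in $L_1$ by $\widetilde Q\in L_2([0,1]^2)$, expand $\widetilde Q=\sum_{i,j}c_{ij}e_i\otimes e_j$ in the orthonormal tensor basis of $L_2$, observe that
$$\int_0^1\int_0^1\widetilde Q(t,s)e_k(t)e_k(s)\,dt\,ds=c_{kk},$$
and conclude from $\sum_k|c_{kk}|^2\le\|\widetilde Q\|_{L_2}^2<\infty$ that $c_{kk}\to0$, hence its Cesàro average vanishes. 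The error term is bounded by $M^2\|Q^V_{\mu\mu}-\widetilde Q\|_{L_1}$ using uniform boundedness, so it can be made arbitrarily small; this yields Cesàro convergence to zero for the full Volterra contribution.

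Combining these two limits gives $\lim_n\frac 1n\sum_{k=1}^n Q(p_\mu\otimes e_k,p_\mu\otimes e_k)=\int_0^1 Q^L_{\mu\mu}(t)\,dt$ for each $\mu$, and then taking the weighted sum over $\mu$ with $\sigma_g^2=1$ for $g=\delta$ and $\sigma_g^2=-1$ for $g=\eta$ reproduces $\int_0^1 Q^L_{\mu\nu}(t)g^{\mu\nu}dt=tr^g_L(Q)$. The main technical obstacle is really the Volterra estimate; once the $L_1\to L_2$ approximation is in place, everything else is a routine interplay between the weakly uniformly dense hypothesis and uniform boundedness of the basis.
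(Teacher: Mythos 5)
Your proposal is correct and follows essentially the same route as the paper: the same reduction via the decomposition~(\ref{f''}) on $E_0$ (antisymmetry killing the singular term on the diagonal), followed by the two convergence facts for the L\'evy and Volterra parts. The only difference is that the paper simply cites Ref.~\cite{KOS} for those two facts (extension of~(\ref{unifdence}) to $L_1$ integrands and the vanishing of $\int_0^1\int_0^1 K(s,t)e_n(s)e_n(t)\,dt\,ds$ for $K\in L_1$), whereas you supply self-contained proofs of both via the $L_1$--$L_\infty$ and $L_1$--$L_2$ approximation arguments using the uniform bound on $\{e_n\}$; these arguments are sound.
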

\begin{proof}
%Set $l_n(t,s)=\frac 1n\sum_{k=1}^n e_k(s)e_k(t)$.
Since $e_n(0)=e_n(1)=0$ for all $n\in\mathbb{N}$ and $Q^S_{\mu\nu}$ is anti-symmetric tensor, 
the following equality holds
\begin{multline*}
\frac 1n\sum_{k=1}^n  Q(p_\mu e_k,p_\mu e_k)=\frac 1n\sum_{k=1}^n\int_0^1\int_0^1Q^V_{\mu\mu}(t,s)e_k(t)e_k(s)dtds+\\+\int_0^1Q^L_{\mu\mu}(t)(\frac 1n\sum_{k=1}^n
e_k^2(t))dt.
\end{multline*}
For a weakly uniformly dense and uniformly bounded  orthonormal basis in $L_2(0,1)$ equality~(\ref{unifdence}) holds for any $h\in L_1[0,1]$ and also
$$
\lim_{n\to\infty}\int_0^1\int_0^1K(s,t)e_n(s)e_n(t)dtds=0
$$
holds for any $K\in L_1([0,1]\times[0,1])$ (see Ref.~\cite{KOS}).
Since $Q^L_{\mu\mu}\in L_1(0,1)$ and $Q^V_{\mu\mu}\in L_1([0,1]\times [0,1])$ we obtain~(\ref{tr=tr}).

\end{proof}

The following theorem is a direct corollary of  Theorem~\ref{p}.
\begin{theorem}
Let  an orthonormal basis  $\{e_n\}$ in $L_2(0,1)$ satisfy the conditions of Theorem~\ref{p}.
If $f\in dom \Delta_L$, then 
 $\Delta^{\{e_n\}}_L f=\Delta_L f$ 
  and $\Box^{\{e_n\}}_L f=\Box_L f$.
If  $B\in dom\,div^g_L$, then 
 $div^{g,\{e_n\}}_L B=div^g_L B$.
\end{theorem}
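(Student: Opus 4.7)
The plan is to apply Theorem~\ref{p} pointwise at each $x\in E$ to the relevant Fr\'echet derivative. The statement is almost tautological once one unwinds the definitions: the only analytic content is already contained in Theorem~\ref{p}, and the present result just promotes that pointwise identity from bilinear functionals to differential operators.

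For $f\in dom\,\Delta_L = dom\,D^{2,tr_L^\delta}$, the general scheme of Section~\ref{sec1} says precisely that $f''(x)\in dom\,tr_L^\delta = T^2_E$ for every $x\in E$. Theorem~\ref{p}, applied to $Q=f''(x)$ with $g=\delta$, then yields both that the C\'esaro limit defining $tr^{\delta,\{e_n\}}_L(f''(x))$ exists (so $f\in dom\,\Delta^{\{e_n\}}_L$) and that it equals $tr^\delta_L(f''(x))$. The equality $\Delta^{\{e_n\}}_L f(x) = \Delta_L f(x)$ follows at every $x$. The L\'evy d'Alembertian case is identical after replacing $g=\delta$ by $g=\eta$; no new argument is needed because Theorem~\ref{p} is stated uniformly in $g\in\{\delta,\eta\}$.

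For the divergence part, the only extra remark concerns how $B'(x)$ is viewed as an element of $T^2_E$. By definition of $D^{1,tr_L^g}$, for $B\in dom\,div_L^g$ we have $B'(x)\in L(E,E^\ast)$ at each $x$, and we identify it with the continuous bilinear functional $(u,v)\mapsto \langle B'(x)u,v\rangle$ on $E\times E$. The hypothesis $B'(x)\in dom\,tr_L^g$ means precisely that this bilinear functional lies in $T^2_E$, so Theorem~\ref{p} applies and gives $tr^{g,\{e_n\}}_L(B'(x)) = tr^g_L(B'(x))$, whence $B\in dom\,div^{g,\{e_n\}}_L$ and $div^{g,\{e_n\}}_L B(x) = div^g_L B(x)$ for every $x$.

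There is no real obstacle to overcome: Theorem~\ref{p} has already done the work of comparing the C\'esaro mean of diagonal quadratic-form values against the integral-functional definition of $tr_L^g$, using weak uniform density and uniform boundedness of the basis to kill the Volterra part and identify the L\'evy part. The corollary merely records that this identity, being pointwise in $Q\in T^2_E$, transfers verbatim to the derivatives $f''(x)$ and $B'(x)$ and hence to the differential operators built from them.
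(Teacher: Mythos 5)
Your proposal is correct and matches the paper's intent exactly: the paper gives no separate proof, stating only that the result is a direct corollary of Theorem~\ref{p}, and your argument is precisely the pointwise application of that theorem to $f''(x)$ and $B'(x)$ (with the standard identification of $L(E,E^\ast)$ with continuous bilinear functionals) that this phrase presupposes.
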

\begin{remark}
The definitions of the L\'evy Laplacian and the L\'evy d'Alembertian from  section~\ref{sec1} and section~\ref{sec2} can be naturally transferred to the space\newline$C^2(E,M_m(\mathbb{C}))$. Additionally  the definitions
of the L\'evy traces can be transferred to the space $L(E,E_0^\ast)$.
This fact allows  to define the L\'evy divergence on the space $C^1(E,L(E_0,M_m(\mathbb{C})))$.
Such divergence will be used in the following sections.
\end{remark}

\begin{remark}
\label{nonclassical}
Choose $E=H=W^{1,2}_0([0,1],\mathbb{R}^{d})\}$.
Let  $\{f_n\}$ be an orthonormal basis in $E_1=W^{1,2}_0([0,1],\mathbb{R})\}$ defined by
$f_1(t)=t$ and $f_n(t)=\frac {\sqrt2} {\pi (n-1)}\sin(\pi (n-1) t)$ for $n>1$. Consider the operator $N\colon span\{f_n\colon n\in \mathbb{N}\} \to E_1$ defined by
$Nf_n=(n-1)f_n$. Then $tr^{\{f_n\},g}_{LN}$ coincides with  $tr^{g}_{L}$ on the space $T^2_E$.
The differential operators generated by the trace $tr^{g}_{L}$  are restrictions  of the corresponding
differential operators generated by the trace $tr^{\{f_n\},g}_{LN}$. 

\end{remark}

\section{The Yang-Mills equations}

The summation over repeated upper and lower indices is assumed.
Let $G$ be a closed Lie group realized as a subgroup of $U(N)$, and let $Lie(G)$  be a Lie algebra realized as a subalgebra of $u(N)$.
A connection in the trivial vector bundle with base  $\mathbb{R}^{d}$, fiber $\mathbb{C}^N$, structure group $G$ is defined as a  $Lie(G)$-valued $C^\infty$-smooth 1-form $A_\mu(x)dx^\mu$ on $\mathbb{R}^{d}$. For the function $\varphi\in C^1(\mathbb{R}^{d},Lie(G))$ the  covariant derivative along the field $(\frac{\partial}{\partial_\mu})$ is defined by
 $$\nabla_\mu \varphi=\partial_\mu \varphi+[A_\mu,\varphi].$$
 The tensor of curvature is defined by $Lie(G)$-valued  2-form $$F(x)=\sum_{\mu<\nu} F_{\mu \nu}(x)dx^\mu\wedge dx^\nu,$$ where $F_{\mu\nu}=\partial_\mu A_\nu-\partial_\nu A_\mu+[A_\mu,A_\nu]$.
 The connection  $A$ is  a solution to the Yang-Mills equations with the current $j(x)=j_\mu(x)dx^\mu$ if
\begin{equation}
\label{YM}
g^{\lambda \mu}\nabla_\lambda F_{\mu \nu}=j_\nu.
\end{equation}

 For any curve $\sigma\in E$  an operator $U^A_{t,s}(\sigma)$, where $0\leq s\leq t \leq1$, is defined as a  solution to the system of differential equations
 \begin{equation*}
\label{form222}
 \left\{
\begin{aligned}
\frac
{d}{dt}U^A_{t,s}(\sigma)=-A_{\mu}(\sigma(t))\dot{\sigma}^{\mu}(t) U^A_{t,s}(\sigma)\\
\frac
d{ds}U^A_{t,s}(\sigma)=U^A_{t,s}(\sigma)A_{\mu}(\sigma(s))\dot{\sigma}^{\mu}(s)\\
\left.U^A_{t,s}(\sigma)\right|_{t=s}=I_N.
\end{aligned}
\right.
\end{equation*}
Here and below the symbol $I_N$ denotes the unitary matrix in $M_N(\mathbb{C})$.
The symbol $U^A_{s,t}(\sigma)$ denotes $U^A_{t,s}(\sigma)^{-1}$,  where $0\leq s\leq t \leq1$.
 The operator $U^A_{1,0}(\sigma)$ is a parallel transport along $\sigma$.

\begin{proposition}
It is valid that
$U^A_{1,0}\in C^\infty(E,M_N(\mathbb{C}))$.
If $u,v\in E$, then
\begin{equation*}
\label{u'}
(U_{1,0}^A)'(\sigma)(u)=-\int_0^1U^A_{1,t}(\sigma)F_{\mu\nu}(\sigma(t))u^\mu(t)\dot{\sigma}^\nu(t)U^A_{t,0}(\sigma)dt
%-\\
%-A_\mu(\sigma(t))u^\mu(t)U_{t,s}(\sigma)-
-A_\mu(\sigma(1))u^\mu(1)U^A_{1,0}(\sigma).
\end{equation*}
\end{proposition}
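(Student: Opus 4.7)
The plan is to view the defining ODE for $U^A_{t,0}(\sigma)$ as a linear ODE in $M_N(\mathbb{C})$ depending on $\sigma\in E$ as a parameter, to establish smoothness via the standard theorem on smooth dependence of ODE solutions on parameters, and then to extract the first derivative by Duhamel's variation-of-parameters formula followed by an integration by parts that produces the curvature tensor. For $\sigma\in E$ the map $t \mapsto A_\mu(\sigma(t))\dot\sigma^\mu(t)$ lies in $L_1([0,1], M_N(\mathbb{C}))$, so the absolutely continuous version of Picard--Lindel\"of yields a unique solution $U^A_{t,0}(\sigma)$. Since $A$ is $C^\infty$ and the map $\sigma \mapsto A_\mu(\sigma(\cdot))\dot\sigma^\mu(\cdot)$ from $E$ to $L_1([0,1],M_N(\mathbb{C}))$ is Fr\'echet $C^\infty$, this transfers to $C^\infty$ dependence of $U^A_{t,0}(\sigma)$ on $\sigma$, uniformly in $t\in[0,1]$.

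For the derivative I would differentiate the defining ODE in the direction $u\in E$. The resulting linear inhomogeneous ODE for $V(t):=(U^A_{t,0})'(\sigma)(u)$ has zero initial value, and Duhamel's formula gives
\begin{equation*}
V(1) = -\int_0^1 U^A_{1,s}(\sigma)\bigl[\partial_\nu A_\mu(\sigma(s))\, u^\nu(s)\, \dot\sigma^\mu(s) + A_\mu(\sigma(s))\, \dot u^\mu(s)\bigr]\, U^A_{s,0}(\sigma)\, ds.
\end{equation*}

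I would then integrate the $\dot u^\mu$-term by parts. Using the two ODEs that govern $s\mapsto U^A_{1,s}(\sigma)$ and $s\mapsto U^A_{s,0}(\sigma)$, the $s$-derivative of $U^A_{1,s}(\sigma) A_\mu(\sigma(s)) U^A_{s,0}(\sigma)$ equals $U^A_{1,s}\bigl[\partial_\nu A_\mu + [A_\nu,A_\mu]\bigr](\sigma(s))\dot\sigma^\nu(s)U^A_{s,0}$. Since $u(0)=0$ for every $u\in E$, only the endpoint $s=1$ contributes to the boundary term, producing $-A_\mu(\sigma(1))u^\mu(1)U^A_{1,0}(\sigma)$. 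Adding the resulting interior integral to the first integral above and relabeling the dummy indices in the Duhamel term converts the bracketed expression into $\partial_\nu A_\mu - \partial_\mu A_\nu + [A_\nu,A_\mu] = F_{\nu\mu} = -F_{\mu\nu}$, giving exactly the claimed formula.

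The main obstacle will be the careful handling of the low-regularity cases $E = W^{1,1}_0$ and $E = W^{1,2}_0$: the coefficient of the ODE is only in $L_1$, so one must invoke the Carath\'eodory-type absolutely continuous version of Picard--Lindel\"of and verify Fr\'echet $C^\infty$-smoothness of the parameter map in the $L_1$-norm. The integration by parts must also be justified against the merely absolutely continuous function $u$, which is legitimate because $s \mapsto U^A_{1,s}(\sigma)A_\mu(\sigma(s))U^A_{s,0}(\sigma)$ is itself absolutely continuous on $[0,1]$.
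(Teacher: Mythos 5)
Your computation is correct, and it is essentially the standard argument: the paper itself gives no proof but defers to Gross and Driver, who derive the derivative of the parallel transport in the same way — differentiating the defining ODE in the direction $u$, solving the resulting inhomogeneous linear equation by variation of parameters, and integrating the $\dot u^\mu$-term by parts so that the boundary term at $s=1$ yields $-A_\mu(\sigma(1))u^\mu(1)U^A_{1,0}(\sigma)$ and the interior terms assemble into $F_{\mu\nu}$. Your attention to the Carath\'eodory setting for $E=W^{1,1}_0$ and to the justification of the integration by parts against merely absolutely continuous $u$ is exactly the care the low-regularity cases require.
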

For a proof of this proposition we refer the reader to Ref.~\cite{Gross} and also to Ref.~\cite{Driver}.

The results of Proposition~\ref{AGVpr1}, of Theorem~\ref{AGV1} and of Theorem~\ref{LLYM} for $j=0$ 
belong to Accardi, Gibilisco and Volovich. 

\begin{proposition}
\label{AGVpr1}
If $\sigma\in E$ and $u,v\in E_0$,
the following equality holds
\begin{multline}
\label{u''}
(U_{1,0}^A)''(\sigma)(u,v)
=\int_0^1\int_0^1K^V_{\mu\nu}(\sigma)(t,s)u^{\mu}(t)v^{\nu}(s)dtds+\\
+\int_0^1K^L_{\mu\nu}(\sigma)(t)u^{\mu}(t)v^{\nu}(t)dt
%+v^{\mu}(r)u^{\nu}(r))dr+\\
+\frac 12\int_0^1K^S_{\mu\nu}(\sigma)(t)(\dot{u}^{\mu}(t)v^{\nu}(t)+\dot{v}^{\mu}(t)u^{\nu}(t))dt,
\end{multline}
where
\begin{equation}
\label{KV}
K^V_{\mu\nu}(\sigma)(t,s)=\begin{cases}
U^A_{1,t}(\sigma)F_{\mu\lambda}(\sigma(t))\dot{\sigma}^\lambda(t)U^A_{t,s}(\sigma)F_{\nu\kappa}(\sigma(s))
\dot{\sigma}^\kappa(s)U^A_{s,0}(\sigma),&\text{if $t\geq s$}\\
U^A_{1,s}(\sigma)F_{\nu\kappa}(\sigma(s))\dot{\sigma}^\kappa(s)U^A_{s,t}(\sigma)F_{\mu\lambda}(\sigma(t))\dot{\sigma}^\lambda(t)U^A_{t,0}(\sigma)
,&\text{if $t<s$},
\end{cases}
%U_{1,r}(\sigma)(-F_{\mu\nu}(\sigma(r))\dot{\sigma}^\nu(r)U_{r,s}(\sigma)(-F_{\mu\nu}(\sigma(s))\dot{\sigma}^\nu(s)U_{s,0}(\sigma)\theta(r-s)+\\
%+\theta(s-r),
\end{equation}
$$
K^L_{\mu\nu}(\sigma)(t)=\frac 12U^A_{1,t}(\sigma)(-\nabla_\mu F_{\nu\lambda}(\sigma(t))\dot{\sigma}^\lambda(t)-\nabla_\nu F_{\mu\lambda}(\sigma(t))\dot{\sigma}^\lambda(t))U^A_{t,0}(\sigma),
$$
$$
K^S_{\mu\nu}(\sigma)(t)=U^A_{1,t}(\sigma)F_{\mu\nu}(\sigma(t))U^A_{t,0}(\sigma).
$$
\end{proposition}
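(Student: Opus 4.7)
The plan is to differentiate the formula for $(U^A_{1,0})'(\sigma)(u)$ from the preceding proposition in direction $v$, apply the product rule, and identify the resulting terms with the Volterra, L\'evy, and singular parts of~(\ref{f''}). Since $u,v\in E_0$ vanish at both endpoints of $[0,1]$, all boundary contributions at $0$ and $1$ disappear automatically.

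As a preparatory step I would derive auxiliary variation formulas for the partial parallel transports $U^A_{t,0}(\sigma)$ and $U^A_{1,t}(\sigma)$. Differentiating their defining ODEs in direction $v$, solving the resulting linear equations by variation of parameters, and integrating by parts the term involving $A_\mu(\sigma)\dot v^\mu$ makes the combination $\partial_\lambda A_\mu-\partial_\mu A_\lambda+[A_\lambda,A_\mu]=F_{\lambda\mu}$ appear; the output is
\begin{align*}
(U^A_{t,0})'(\sigma)(v) &= -\!\int_0^t U^A_{t,\tau}F_{\rho\kappa}(\sigma(\tau))v^\rho(\tau)\dot\sigma^\kappa(\tau)U^A_{\tau,0}\,d\tau-A_\rho(\sigma(t))v^\rho(t)U^A_{t,0},\\
(U^A_{1,t})'(\sigma)(v) &= -\!\int_t^1 U^A_{1,\tau}F_{\rho\kappa}(\sigma(\tau))v^\rho(\tau)\dot\sigma^\kappa(\tau)U^A_{\tau,t}\,d\tau+U^A_{1,t}A_\rho(\sigma(t))v^\rho(t).
\end{align*}

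Applying the product rule to the four $\sigma$-dependent factors in $(U^A_{1,0})'(\sigma)(u)=-\int_0^1 U^A_{1,t}F_{\mu\nu}(\sigma(t))u^\mu(t)\dot\sigma^\nu(t)U^A_{t,0}\,dt$ produces four contributions. The double integrals from the variations of $U^A_{1,t}$ (over $\tau\in[t,1]$) and of $U^A_{t,0}$ (over $\tau\in[0,t]$) cover the regions $\{s\geq t\}$ and $\{s\leq t\}$ respectively; after renaming the dummy variables they reproduce the two branches of~(\ref{KV}) and together give $\int\!\!\int K^V_{\mu\nu}(t,s)u^\mu(t)v^\nu(s)\,dt\,ds$. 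The boundary parts at $\tau=t$ of these two variations combine into $-[A_\lambda,F_{\mu\nu}]v^\lambda u^\mu\dot\sigma^\nu$, which together with the term $-\partial_\lambda F_{\mu\nu}v^\lambda u^\mu\dot\sigma^\nu$ coming from the variation of $F(\sigma(t))$ assembles the covariant derivative and yields $-\int_0^1 U^A_{1,t}\nabla_\lambda F_{\mu\nu}v^\lambda u^\mu\dot\sigma^\nu U^A_{t,0}\,dt$. Finally, the variation of $\dot\sigma^\nu$ contributes $-\int_0^1 U^A_{1,t}F_{\mu\nu}u^\mu\dot v^\nu U^A_{t,0}\,dt$.

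To conclude, I would invoke the fact that the Fr\'echet second derivative is a symmetric bilinear form on $E_0\times E_0$, so the expression obtained coincides with its own $(u\leftrightarrow v)$-symmetrization. Symmetrizing the $\nabla F$-piece in $\mu,\nu$ yields exactly $\int K^L_{\mu\nu}(t)u^\mu v^\nu\,dt$ with $K^L_{\mu\nu}$ symmetric; symmetrizing the $F\dot v$-piece and using $F_{\mu\nu}u^\mu\dot v^\nu=-F_{\mu\nu}\dot v^\mu u^\nu$ (antisymmetry of $F$) yields $\tfrac12\int K^S_{\mu\nu}(t)(\dot u^\mu v^\nu+\dot v^\mu u^\nu)\,dt$ with $K^S_{\mu\nu}$ antisymmetric. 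A brief verification that $K^V\in L_1([0,1]^2)$, $K^L\in L_1([0,1])$, $K^S\in L_\infty([0,1])$ (all following from $\dot\sigma$ being in the appropriate $L_p$ space) places the second derivative in $T^2_E$. The main obstacle is combinatorial bookkeeping: tracking signs and index positions of the boundary terms at $\tau=t$, and correctly matching the two double integrals with the piecewise branches of~(\ref{KV}). Conceptually, the essential point is that the commutator boundary pieces are precisely what promotes $\partial_\lambda F$ to the covariant derivative $\nabla_\lambda F$ in the L\'evy part of~(\ref{u''}).
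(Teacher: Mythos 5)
Your proposal is correct, and it is essentially the standard computation: the paper itself gives no proof of Proposition~\ref{AGVpr1}, deferring to Refs.~\cite{AGV1994,VolkovLLI}, and the argument there is precisely the one you outline (differentiate the first-derivative formula, use the variation formulas for the partial transports $U^A_{t,0}$ and $U^A_{1,t}$, and observe that the boundary contributions at $\tau=t$ supply the commutator $[A_\lambda,F_{\mu\nu}]$ that completes $\partial_\lambda F_{\mu\nu}$ to $\nabla_\lambda F_{\mu\nu}$). I checked the signs: the two double integrals over $\{\tau\ge t\}$ and $\{\tau\le t\}$ do reproduce the two branches of~(\ref{KV}), and the commutator bookkeeping is right. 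The one place where your route is slightly nonstandard is the final step: rather than integrating by parts the term $-\int_0^1 U^A_{1,t}F_{\mu\nu}u^\mu\dot v^\nu U^A_{t,0}\,dt$ to extract the symmetric L\'evy part and antisymmetric singular part directly, you invoke the symmetry of the second Fr\'echet derivative and replace the derived expression by its $(u\leftrightarrow v)$-symmetrization. This is legitimate (the derived bilinear form equals the symmetric form $(U^A_{1,0})''(\sigma)$, hence equals its own symmetrization) and in fact avoids a further integration by parts whose boundary terms would again need $u,v\in E_0$; it buys a shorter argument at the cost of having to note explicitly that the Volterra part is already symmetric so that the symmetrization only touches the $\nabla F$ and $F\dot v$ pieces. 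Your concluding verification that $K^V\in L_1([0,1]^2)$, $K^L\in L_1$, $K^S\in L_\infty$ is exactly what is needed to place $(U^A_{1,0})''(\sigma)$ in $T^2_E$, which the later theorems rely on.
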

For a proof of this proposition we refer the reader to Ref.~\cite{AGV1994} and also to Ref.~\cite{VolkovLLI}.

The following theorem  is a direct corollary  of Proposition~\ref{AGVpr1}.
\begin{theorem}
\label{AGV1}
The following equality holds
$$
D^2_{tr^g_L}U_{1,0}^A(\sigma)=-\int_0^1 U^A_{1,t}(\sigma)g^{\mu\nu}\nabla_\mu F_{\nu\lambda}(\sigma(t))\dot{\sigma}^\lambda(t)U^A_{t,0}(\sigma)dt
$$
\end{theorem}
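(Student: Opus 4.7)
The plan is to derive the theorem as an immediate consequence of Proposition~\ref{AGVpr1} by applying the definition of the L\'evy trace from Section~\ref{sec2}. First, I would verify that the bilinear functional $(U_{1,0}^A)''(\sigma)$ belongs to $T^2_E$, i.e.\ that the decomposition in~(\ref{u''}) has the correct structural properties. The kernel $K^L_{\mu\nu}(\sigma)(t)$ is manifestly symmetric in $\mu,\nu$ by its defining symmetrization, and $K^S_{\mu\nu}(\sigma)(t) = U^A_{1,t}F_{\mu\nu}U^A_{t,0}$ is antisymmetric because $F_{\mu\nu}$ is antisymmetric. The required integrability (with $K^V$ in $L_1$, $K^L$ in $L_1$, $K^S$ in $L_\infty$) follows from the smoothness of the connection $A$, the continuity of the parallel transport in $t$, and the assumption that $\dot{\sigma}\in L_p$ for the appropriate $p$ associated to the chosen $E$; once $\sigma$ is fixed, $K^S_{\mu\nu}(\sigma)(t)$ is bounded in $t$, while $K^L$ picks up one factor of $\dot\sigma$ and $K^V$ picks up two.

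Once membership in $T^2_E$ is in hand, I would simply read off the L\'evy trace from its definition,
\begin{equation*}
tr^g_L (U_{1,0}^A)''(\sigma) = \int_0^1 K^L_{\mu\nu}(\sigma)(t)\, g^{\mu\nu}\, dt,
\end{equation*}
substitute the explicit expression for $K^L_{\mu\nu}$ provided by Proposition~\ref{AGVpr1}, and use the symmetry $g^{\mu\nu}=g^{\nu\mu}$ to collapse the two terms in the symmetrization into a single one. This gives
\begin{equation*}
D^2_{tr^g_L}U_{1,0}^A(\sigma) = -\int_0^1 U^A_{1,t}(\sigma)\, g^{\mu\nu}\nabla_\mu F_{\nu\lambda}(\sigma(t))\,\dot{\sigma}^\lambda(t)\, U^A_{t,0}(\sigma)\, dt,
\end{equation*}
which is exactly the claim. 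The Volterra term $K^V$ and the singular term $K^S$ contribute nothing, by the very definition of $tr^g_L$ on $T^2_E$.

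There is essentially no hard step: the content of the theorem is entirely loaded into Proposition~\ref{AGVpr1}, and the only thing to be checked is that the second derivative has the specific decomposition~(\ref{f''}) and lies in the domain of $tr^g_L$. The mildest subtlety is the interpretation of the right-hand side of~(\ref{u''}) as a matrix-valued functional rather than a scalar one; this is handled by the remark in Section~\ref{sec2} extending the L\'evy traces to operators taking values in $M_m(\mathbb{C})$, which makes Definition of $tr^g_L$ and hence of $D^2_{tr^g_L}$ directly applicable to $U^A_{1,0}\in C^\infty(E,M_N(\mathbb{C}))$.
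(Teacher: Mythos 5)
Your proposal is correct and follows exactly the route the paper intends: the paper states Theorem~\ref{AGV1} is a direct corollary of Proposition~\ref{AGVpr1}, and your argument simply fills in the routine steps — membership of $(U_{1,0}^A)''(\sigma)$ in $T^2_E$, extraction of the L\'evy part by $tr^g_L$, and the collapse of the symmetrized $K^L_{\mu\nu}$ under contraction with the symmetric $g^{\mu\nu}$. Nothing is missing.
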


 If $\sigma\in E$, everywhere below the symbol $\sigma^r$, where $0\leq r \leq 1$, denotes the curve from $E$ defined by
 $\sigma^r(t)=\sigma(rt)$.

\begin{theorem}
\label{LLYM}
The following two assertions are equivalent:
\begin{enumerate}
\item the connection  $A$ is a solution to the Yang-Mills equations with the current~(\ref{YM});
\item    the equality 
\begin{equation}
\label{2assertion}
D^2_{tr^g_L}U_{1,0}^A(\sigma)=-\int_0^1 U^A_{1,t}(\sigma)j_\nu(\sigma(t))\dot{\sigma}^\nu(t)U^A_{t,0}(\sigma)dt.
\end{equation}
holds for the parallel transport  $U^A_{1,0}$.
\end{enumerate}
\end{theorem}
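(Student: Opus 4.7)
The plan is to combine Theorem~\ref{AGV1} with a localization argument based on the reparametrized curves $\sigma^r$ introduced just before the statement. The direction (1)$\Rightarrow$(2) is immediate: if $g^{\lambda\mu}\nabla_\lambda F_{\mu\nu}=j_\nu$ holds pointwise, substituting into the integrand produced by Theorem~\ref{AGV1} yields exactly equation~(\ref{2assertion}). So the real work is the converse.

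For (2)$\Rightarrow$(1), I would subtract~(\ref{2assertion}) from the identity of Theorem~\ref{AGV1} and abbreviate $G_\lambda(x)=g^{\mu\nu}\nabla_\mu F_{\nu\lambda}(x)-j_\lambda(x)$. The hypothesis then reads
$$
\int_0^1 U^A_{1,t}(\sigma)\,G_\lambda(\sigma(t))\,\dot\sigma^\lambda(t)\,U^A_{t,0}(\sigma)\,dt=0
$$
for every $\sigma\in E$. The next step is to apply this to the curve $\sigma^r$, use $\dot\sigma^r(t)=r\dot\sigma(rt)$, change variables $s=rt$, and invoke the reparametrization invariance of the parallel transport in the form $U^A_{s/r,0}(\sigma^r)=U^A_{s,0}(\sigma)$ and $U^A_{1,s/r}(\sigma^r)=U^A_{r,s}(\sigma)$. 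After these manipulations the identity becomes
$$
\int_0^r U^A_{r,s}(\sigma)\,G_\lambda(\sigma(s))\,\dot\sigma^\lambda(s)\,U^A_{s,0}(\sigma)\,ds=0\qquad\text{for every }r\in[0,1].
$$

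Then I would factor $U^A_{r,s}(\sigma)=U^A_{r,0}(\sigma)\,U^A_{0,s}(\sigma)$, cancel the invertible matrix $U^A_{r,0}(\sigma)$, and differentiate the resulting absolutely continuous function of $r$ at a generic point. This produces $U^A_{0,r}(\sigma)\,G_\lambda(\sigma(r))\,\dot\sigma^\lambda(r)\,U^A_{r,0}(\sigma)=0$, which by invertibility of the parallel transport reduces to the pointwise relation $G_\lambda(\sigma(r))\,\dot\sigma^\lambda(r)=0$. Since every pair $(x,v)\in\mathbb{R}^d\times\mathbb{R}^d$ is realized as $(\sigma(r),\dot\sigma(r))$ for a suitable $\sigma\in E$, this forces $G_\lambda(x)v^\lambda=0$ for all $v$, and hence $G_\lambda\equiv 0$, which is the Yang-Mills equation~(\ref{YM}). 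The step I expect to be most delicate is the bookkeeping of the reparametrization identities for $U^A_{t,s}(\sigma^r)$, ensuring that the sub-transport endpoints match correctly after the substitution $s=rt$, together with verifying that the class $E$ is rich enough to realize every admissible pair $(x,v)$ so that the pointwise conclusion extends to all of $\mathbb{R}^d$.
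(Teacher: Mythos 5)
Your proposal is correct and follows essentially the same route as the paper: both directions rest on Theorem~\ref{AGV1}, and the converse is obtained by applying the hypothesis to the rescaled curves $\sigma^r$, using $U^A_{rt,rs}(\sigma)=U^A_{t,s}(\sigma^r)$ and a change of variables, differentiating in $r$, cancelling the invertible parallel transports, and realizing arbitrary pairs $(x,v)$ by suitable curves. Your bookkeeping with $G_\lambda$ and the factorization $U^A_{r,s}=U^A_{r,0}U^A_{0,s}$ is just a slightly tidier presentation of the identical argument.
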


\begin{proof}
Let~(\ref{2assertion}) hold. Note that ${\dot{\sigma}^{r\nu}(t)}=r{\dot{\sigma}^{\nu}(rt)}$ and $U^A_{(rt),(rs)}(\sigma)=U^A_{t,s}(\sigma^r)$. Let $\sigma\in E\cap C^1([0,1],\mathbb{R}^d)$. Then 
\begin{multline}
\label{vicl}
\int_0^r U^A_{1,t}(\sigma)(-g^{\lambda\mu}\nabla_\lambda F_{\mu\nu}(\sigma(t))\dot{\sigma}^\nu(t)U^A_{t,0}(\sigma)dt=\\=U^A_{1,r}(\sigma)D^2_{tr^g_L}U_{1,0}^A(\sigma^r)=-U^A_{1,r}(\sigma)\int_0^1 U^A_{1,t}(\sigma^r)j_\nu(\sigma^r(t))\dot{\sigma^r}^\nu(t)U^A_{t,0}(\sigma^r)dt=\\
=\int_0^r U^A_{1,t}(\sigma)(-g^{\lambda\mu}\nabla_\lambda F_{\mu\nu}(\sigma(t))\dot{\sigma}^\nu(t)U^A_{t,0}(\sigma)dt.
\end{multline}
Differentiating~(\ref{vicl}) with respect to $r$, we  obtain
$$
U^{A}_{1,r}(\sigma)g^{\lambda\mu}\nabla_\lambda F_{\mu\nu}(\sigma(r))\dot{\sigma}^\nu(r)U^A_{r,0}(\sigma)=
U^{A}_{1,r}(\sigma)j_\nu(\sigma^r)\dot{\sigma}^\nu(r)U^A_{r,0}(\sigma).
$$
Choose an  appropriate $\sigma\in E\cap C^1([0,1],\mathbb{R}^d)$, we see that the connection $A$ is a solution to the Yang-Mills equations with current~(\ref{YM}).
The other side of the theorem is obvious.
\end{proof}

Consider the function $B^A\colon  E\to L(E_0,Lie(G))$ defined by
$$B^A(\sigma)u=U^A_{0,1}(\sigma)\partial_uU^A_{1,0}(\sigma),\ \sigma\in E,\,u\in E_0.$$
It is known that  $B^A\in C^\infty(E,L(E_0,Lie(G)))$ (see Ref.~\cite{Gross}).

\begin{proposition}
\label{divL}
The following equalities hold
\begin{multline}
\label{laplas}
div_L^gB^A(\sigma)=-\int_0^1U^A_{0,t}(\sigma)g^{\lambda \mu}\nabla_\lambda F_{\mu\nu}(\sigma(t))
\dot{\sigma}^\nu(t)U^A_{t,0}(\sigma)dt=
\\=U_{0,1}^A(\sigma)D^2_{tr^g_L}U_{1,0}^A(\sigma).
\end{multline}
\end{proposition}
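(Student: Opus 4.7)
The plan is to differentiate $B^A$ directly, identify the decomposition of the resulting bilinear form in the sense of (\ref{f''}), and then apply the L\'evy trace $tr^g_L$. First I would write $B^A(\sigma)u = U^A_{0,1}(\sigma)(U^A_{1,0})'(\sigma)(u)$ and compute the Fr\'echet derivative in a direction $v\in E$. Differentiating the identity $U^A_{0,1}(\sigma)U^A_{1,0}(\sigma)=I_N$ gives $\partial_v U^A_{0,1}(\sigma)=-B^A(\sigma)(v)\,U^A_{0,1}(\sigma)$, so the product rule yields
\begin{equation*}
(B^A)'(\sigma)(v,u)=-B^A(\sigma)(v)B^A(\sigma)(u)+U^A_{0,1}(\sigma)(U^A_{1,0})''(\sigma)(v,u).
\end{equation*}

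Next I would restrict to $u,v\in E_0$ and treat each term. For the first term, the boundary contribution in the formula for $(U^A_{1,0})'(\sigma)(v)$ from the Proposition before Proposition~\ref{AGVpr1} vanishes because $v(1)=0$, giving
\begin{equation*}
B^A(\sigma)(v)=-\int_0^1 U^A_{0,t}(\sigma) F_{\mu\lambda}(\sigma(t))\dot{\sigma}^\lambda(t) U^A_{t,0}(\sigma)\, v^\mu(t)\, dt,
\end{equation*}
and an analogous formula for $B^A(\sigma)(u)$. Their product is therefore a double integral in $(t,s)$ against a continuous matrix kernel, hence belongs purely to the Volterra part of (\ref{f''}) (no L-coefficient $Q^L$, no anti-symmetric singular coefficient $Q^S$), and so contributes zero to $tr^g_L$. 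For the second term, Proposition~\ref{AGVpr1} gives the explicit decomposition of $(U^A_{1,0})''(\sigma)$ in the form (\ref{f''}) with L-coefficient $K^L_{\mu\nu}(\sigma)(t)$; left multiplication by $U^A_{0,1}(\sigma)$ simply turns $U^A_{1,t}(\sigma)$ into $U^A_{0,t}(\sigma)$ and preserves the form of the decomposition. Applying $tr^g_L$, using the symmetry $g^{\mu\nu}=g^{\nu\mu}$ to collapse the two covariant-derivative terms in $K^L_{\mu\nu}$, one obtains the first equality in (\ref{laplas}). The second equality in (\ref{laplas}) is then an immediate consequence of Theorem~\ref{AGV1}: multiply both sides of that identity on the left by $U^A_{0,1}(\sigma)$.

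The main obstacle is the bookkeeping in the first term: one must verify carefully that the product $-B^A(\sigma)(v)B^A(\sigma)(u)$ has no L-type or S-type component in the canonical decomposition (\ref{f''}). This reduces to the observation that the kernel arising from the product of two absolutely convergent integrals is a genuine function in $L_1([0,1]\times[0,1])$ with no Dirac-like concentration on the diagonal $t=s$ and no term proportional to $\dot{v}$ or $\dot{u}$, so it sits entirely in the Volterra part of (\ref{f''}) and is annihilated by $tr^g_L$.
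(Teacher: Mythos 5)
Your proposal is correct and follows essentially the same route as the paper: differentiate $B^A$ by the product rule, use Proposition~\ref{AGVpr1} for $(U^A_{1,0})''$, observe that the extra product term $-B^A(\sigma)(v)B^A(\sigma)(u)$ contributes only to the Volterra part (the paper folds it into the commutator kernel $R^V$ supported on $t\le s$), and then apply $tr^g_L$ and Theorem~\ref{AGV1}. No substantive difference.
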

\begin{proof}
If $u,v\in E_0$, the following equality holds
\begin{equation}
\partial_u B^A(\sigma)v=\partial_v U^A_{0,1}(\sigma)\partial_u U^A_{1,0}(\sigma)+U^A_{0,1}(\sigma)\partial_v \partial_u U^A_{1,0}(\sigma).
\end{equation}
Then, using~(\ref{u''}), we obtain that
for all $\sigma \in E$ and $u,v\in E_0$
the following holds
\begin{multline}
\partial_{u}B^A(\sigma)v=\int_0^1\int_0^1R^V_{\mu\nu}(t,s)u^{\mu}(t)v^{\nu}(s)dtds+\\
+\int_0^1R^L_{\mu\nu}(t)u^{\mu}(t)v^{\nu}(t)dt
%+v^{\mu}(r)u^{\nu}(r))dr+\\
+\frac 12\int_0^1R^S_{\mu\nu}(t)(\dot{u}^{\mu}(t)v^{\nu}(t)+\dot{v}^{\mu}(t)u^{\nu}(t))dt,
\end{multline}
where
\begin{multline*}
\label{RV}
R^V_{\mu\nu}(\sigma)(t,s)=\\=
\begin{cases}[U^A_{0,s}(\sigma)F_{\nu\kappa}(\sigma(s))\dot{\sigma}^\kappa(s)U^A_{s,0}(\sigma),U^A_{0,t}(\sigma)F_{\mu\lambda}(\sigma(t))\dot{\sigma}^\lambda(t)U^A_{t,0}(\sigma)]
,\text{if $t\leq s$}\\
0,\text{if $t>s$},
\end{cases}
\end{multline*}
$$
R^L_{\mu\nu}(\sigma)(t)=\frac 12U^A_{0,t}(\sigma)(-\nabla_\mu F_{\nu\lambda}(\sigma(t))\dot{\sigma}^\lambda(t)-\nabla_\nu F_{\mu\lambda}(\sigma(t))\dot{\sigma}^\lambda(t))U^A_{t,0}(\sigma),
$$
$$
R^S_{\mu\nu}(\sigma)(t)=U^A_{0,t}(\sigma)F_{\mu\nu}(\sigma(t))U^A_{t,0}(\sigma).
$$
Thus, we obtain~(\ref{laplas}).
\end{proof}

The following result belongs to Gross (see Ref.~\cite{Gross}). It is an analogue of the Poincar\'e lemma.

\begin{theorem}
\label{GrossThm}
Let $E\in\{W^{1,1}_0([0,1],\mathbb{R}^{d}),
W^{1,2}_0([0,1],\mathbb{R}^{d})\}$.
Let a function $B$ from $E$ to $L(E_0, Lie(G)))$ (an infinite-dimensional 1-form) is defined by
\begin{equation}
\label{Bx}
B(\sigma)u=\int_0^1h(\sigma^r)<\dot{\sigma}(r),u(r)>dr,
\end{equation}
where $\sigma\in E$, $u\in E_0$, $h(\sigma)$ is a $Lie(G)$-valued anti-symmetric tensor for any $\sigma\in E$, the function $E\ni \sigma\mapsto h(\sigma)$ is continuously Frech\'et differentiable.\footnote{The function $h$ is called the path 2-form.} 
Let 1-form $B$ be closed in the sense that for all $\sigma\in E$ and $u,v\in E_0$
the following equality holds
$$
\partial_uB(\sigma)v-\partial_vB(\sigma)u+[B(\sigma)u,B(\sigma)v]=0%, u,v\in E_0
$$
Then  there is a  connection $A$ such that $B=B^A$.
\end{theorem}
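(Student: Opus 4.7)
I would extract a finite-dimensional candidate connection $A$ on $\mathbb{R}^d$ from the path $2$-form $h$, and then show, using the closedness of $B$, that $B^A$ coincides with $B$ everywhere on $E$. Matching the integral formula for $B^A(\sigma)u$ implicit in Proposition~\ref{divL} with the integrand defining $B$ suggests identifying $h(\sigma)$ with the curvature at the endpoint of $\sigma$, conjugated by the parallel transport along $\sigma$. For straight-line paths $\sigma_x(t)=tx$ the radial (Fock--Schwinger) gauge makes the parallel transport from $0$ to $tx$ trivial, which motivates the ansatz
$$ F_{\mu\nu}(x) := h(\sigma_x)_{\mu\nu},\qquad A_\nu(x) := \int_0^1 r\,F_{\mu\nu}(rx)\,x^\mu\,dr. $$
Smoothness of $h$ yields $A\in C^\infty(\mathbb{R}^d,Lie(G))$ and $x^\nu A_\nu(x)=0$ by antisymmetry.

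The next step is to verify that the curvature $F^A$ of $A$ equals $F$. In radial gauge a classical computation reduces this equality to the non-abelian cyclic identity $\nabla_{[\lambda}F_{\mu\nu]}=0$, which I would recover from the closedness hypothesis by evaluating
$$ \partial_u B(\sigma)v-\partial_v B(\sigma)u+[B(\sigma)u,B(\sigma)v]=0 $$
at $\sigma=\sigma_x$ on smooth tangent vectors $u,v\in E_0$ that are bumps concentrated near some $r_0\in(0,1)$ and polarized in distinct coordinate directions. An asymptotic expansion in the bump width is to pick out the Taylor coefficients of $h(\sigma_x^r)$ in $x$: the linear coefficient produces the cyclic $\partial_\lambda F_{\mu\nu}$ terms, while the quadratic cross-term supplies the commutator contribution $[A,F]$.

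Once $F^A=F$ is in hand, I would finish by a $1$-form matching argument. Both $B$ and $B^A$ are closed $Lie(G)$-valued $1$-forms on $E$ (closedness of $B^A$ is automatic, since it is the left-logarithmic derivative of $\sigma\mapsto U^A_{1,0}(\sigma)$), and by construction they agree on the finite-dimensional family $\{\sigma_x:x\in\mathbb{R}^d\}$ of straight-line paths. Any $\sigma\in E$ can be joined to $\sigma_{\sigma(1)}$ by a smooth family $s\mapsto\sigma_s\in E$ whose velocity $\dot\sigma_s$ lies in $E_0$ (a deformation fixing the endpoint); integrating the matrix ODE $\dot V_s=V_s[B(\sigma_s)-B^A(\sigma_s)](\dot\sigma_s)$ with trivial initial condition, using the closedness of $B-B^A$ to propagate consistently, forces $V_1=I$ and hence $B(\sigma)=B^A(\sigma)$.

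The main obstacle is the middle step: translating the infinite-dimensional closedness of $B$ into the finite-dimensional Bianchi identity for $F$. This requires a careful localization via bump-test variations in $E_0$ together with a precise asymptotic analysis of the three terms in the closedness equation, with second-variation cross-terms tracked exactly enough to reproduce the commutator in $[A,F]$. Once this Bianchi extraction is in place, both the construction of $A$ and the global matching of $B$ and $B^A$ are comparatively routine.
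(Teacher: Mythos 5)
First, be aware that the paper does not actually prove this statement: it is quoted from Gross (Ref.~\cite{Gross}) with only the remark that Gross's proof for piecewise smooth paths transfers unchanged to $W^{1,1}_0$ and $W^{1,2}_0$. So your proposal is really being measured against Gross's argument. Your steps 1 and 2 are a plausible reconstruction of one half of it: setting $F_{\mu\nu}(x)=h(\sigma_x)_{\mu\nu}$, building the radial-gauge potential, and extracting the Bianchi identity from closedness by localized bump variations can be made to work (though note that $h$ is only assumed continuously Fr\'echet differentiable, so you get a $C^1$ connection rather than the $C^\infty$ one you claim, and the asymptotic bookkeeping you defer is genuinely the hard part, not a routine afterthought).

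The genuine gap is in step 3. Closedness of $B$ and of $B^A$ together with their agreement on the slice $\{\sigma_x : x\in\mathbb{R}^d\}$ does not force $B=B^A$: the non-abelian Poincar\'e lemma on each fibre $E_x=\{\sigma\in E:\sigma(1)=x\}$ determines a primitive $V$ only up to its value at one point of the fibre, and two closed $1$-forms agreeing at one point of a contractible space can differ everywhere else (already in the abelian case, $d(y^2)$ and $0$ on $\mathbb{R}^2$ are both closed and agree at the origin). Moreover "closedness of $B-B^A$" does not typecheck, since the closedness condition is nonlinear in $B$. The missing ingredient --- and the actual heart of Gross's proof --- is the covariance identity $\partial_u h(\sigma)=[h(\sigma),B(\sigma)u]$ for endpoint-fixing variations $u$, which one extracts from the closedness hypothesis by testing it on two bumps with disjoint supports: the term $\partial_v B(\sigma)u$ then vanishes, the singular term drops out, and what survives equates the transverse derivative of $h$ with the commutator. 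This identity yields $h(\sigma)=V(\sigma)^{-1}\,h(\sigma_{\sigma(1)})\,V(\sigma)$ along any endpoint-fixing homotopy, and that is precisely what reduces the infinite-dimensional $1$-form to the finite-dimensional data $F$; without it your matching ODE cannot be closed. The localization technique you already propose for the Bianchi identity is exactly the right tool to prove this covariance identity as well --- it simply has to be stated and used in step 3.
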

In the Gross's theorem the space $\mathcal{P}$ of piecewise  $C^\infty$-smooth functions with the norm $\|\sigma\|=\int_0^1|\dot{\sigma}(t)|dt$  instead  of $E$ was considered. The proof 
 can be transferred unchanged to the case of the spaces  $W^{1,1}_0([0,1],\mathbb{R}^{d})$ and $W^{1,2}_{0}([0,1],\mathbb{R}^d)$.

\begin{remark}
It is known (see Ref.~\cite{Gross}) that if an infinite-dimensional 1-form  $B$ can be represented as~(\ref{Bx}) then $B\in C^\infty(E,L(E_0,Lie(G))$. It was also  proved in Ref.~\cite{Gross}  that for two connection  $A$ and $A'$ such that  $B^A=B^{A'}$ there exists a unique function $b\in C^\infty(\mathbb{R}^d,G)$ with $b(0)=I_N$ such that $A(x)=b^{-1}(x)A'(x)b(x)+b^{-1}(x)db(x)$. 
\end{remark}

\begin{theorem}
\label{thmdiv}
Let  $$E\in\{W^{1,1}_0([0,1],\mathbb{R}^{d}),
W^{1,2}_0([0,1],\mathbb{R}^{d})\}.$$ Let  $B\in C^\infty(E,L(E_0,Lie(G))$ can be represented as~(\ref{Bx}). 
Then $B$ is a solution to the system
\begin{equation}
\label{YMequation3}
 \left\{
\begin{aligned}
div^g_L B(\sigma)=0\\
\partial_uB(\sigma)v-\partial_vB(\sigma)u+[B(\sigma)u,B(\sigma)v]=0,\text{if $u,v\in E_0$}
\end{aligned}
\right.
\end{equation}
if and only if there is a connection  $A$ which is a solution to the Yang-Mills equations 
\begin{equation}
\label{YMwo}
g^{\lambda\mu}\nabla_\lambda F_{\mu \nu}=0,
\end{equation}
such that 
$B=B^A$.
\end{theorem}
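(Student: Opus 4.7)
The plan is to reduce the biconditional to three already-available ingredients: Gross's non-commutative Poincar\'e lemma (Theorem~\ref{GrossThm}), the identity $div^g_L B^A(\sigma) = U^A_{0,1}(\sigma)\, D^2_{tr^g_L} U^A_{1,0}(\sigma)$ from Proposition~\ref{divL}, and the Accardi--Gibilisco--Volovich equivalence between the Yang--Mills equations and the L\'evy-type equation for the parallel transport (Theorem~\ref{LLYM} with $j = 0$). Each direction of the theorem is then essentially a chaining of these three facts.

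For the ``if'' direction I would assume $B = B^A$ for a connection $A$ solving~(\ref{YMwo}) and verify the two equations of~(\ref{YMequation3}) separately. The second (closedness) equation is automatic, since the parallel-transport 1-form $B^A$ is always closed -- this is the easy half of Gross's lemma underlying Theorem~\ref{GrossThm}. For the first equation, I would invoke Proposition~\ref{divL} to rewrite $div^g_L B^A(\sigma)$ as $U^A_{0,1}(\sigma)\, D^2_{tr^g_L} U^A_{1,0}(\sigma)$, and then apply Theorem~\ref{LLYM} with $j = 0$ to see that this quantity vanishes for every $\sigma$ precisely because~(\ref{YMwo}) holds.

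For the ``only if'' direction I would start from $B$ satisfying~(\ref{YMequation3}) and admitting the representation~(\ref{Bx}). The closedness equation is exactly the hypothesis of Theorem~\ref{GrossThm}, so it yields a connection $A$ in the trivial bundle with $B = B^A$. Plugging this back into Proposition~\ref{divL}, the divergence equation becomes $U^A_{0,1}(\sigma)\, D^2_{tr^g_L} U^A_{1,0}(\sigma) = 0$ for all $\sigma \in E$; left-multiplying by the invertible operator $U^A_{1,0}(\sigma)$ gives $D^2_{tr^g_L} U^A_{1,0}(\sigma) = 0$, and Theorem~\ref{LLYM} with $j=0$ then identifies $A$ as a solution of~(\ref{YMwo}).

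The main obstacle I expect is not conceptual but a bookkeeping issue with function spaces: I need to check that the choices $E \in \{W^{1,1}_0([0,1],\mathbb{R}^d), W^{1,2}_0([0,1],\mathbb{R}^d)\}$ of this theorem fit within the framework of Section~\ref{sec2} so that Proposition~\ref{divL} and Theorem~\ref{LLYM} apply verbatim, and that the subclass $\sigma \in E \cap C^1([0,1],\mathbb{R}^d)$ used inside the proof of Theorem~\ref{LLYM} remains large enough in these Sobolev spaces to recover the pointwise Yang--Mills equation from its line-integrated form. Both points are already handled inside the cited proofs (the Sobolev extension of Gross's lemma is noted in the remark following Theorem~\ref{GrossThm}), so no new analytic work is needed.
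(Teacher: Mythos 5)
Your proposal follows exactly the paper's own argument: the closedness equation feeds into Gross's Theorem~\ref{GrossThm} to produce the connection $A$ with $B=B^A$, Proposition~\ref{divL} converts $div^g_L B^A(\sigma)=0$ into $D^2_{tr^g_L}U^A_{1,0}(\sigma)=0$, and Theorem~\ref{LLYM} with $j=0$ identifies $A$ as a Yang--Mills solution, with the converse direction handled by the same chain run backwards (the paper simply calls it trivial). The function-space bookkeeping you flag is indeed already absorbed into the cited statements, so no further work is needed.
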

\begin{proof}
Let  $B$ be a solution to system~(\ref{YMequation3}).
The existence of the $C^\infty$-smooth  connection  $A$ such that $B=B^A$ follows from Theorem~\ref{GrossThm}. 
Then, due to Proposition~\ref{divL},
  the equality $D^2_{tr^g_L}U_{1,0}^A(\sigma)=0$ follows  from the equality $div^g_L B^A(\sigma)=0$.
This implies  $A$ is a solution to~(\ref{YMwo}).
The other side of the theorem is trivial.
\end{proof}

\begin{remark}
The parallel transport can be considered as an infinite-dimensional analogue of the general chiral field.
Recall that if $G$ is a matrix Lie group,  the field  $b\colon \mathbb{R}^d\to G$ is a general chiral field.
Its Dirichet integral  (see Ref.~\cite{DNF}) has a form
\begin{equation}
\label{chiral_field_inf}
\frac 12\int_{\mathbb{R}^d}tr(\partial_\mu b(x)\partial^\mu b^{-1}(x))dx=-\frac12\int_{\mathbb{R}^d}tr(Z_\mu (x)Z^\mu(x))dx,
\end{equation}
where $Z_\mu=b^{-1}(x)\partial_\mu b(x)$.
The equations of motion of the chiral field have the  form
\begin{equation}
\label{chiral_field2}
 \left\{
\begin{aligned}
div Z=0\\
\partial_\mu Z_\nu-\partial_\nu Z_\mu+[Z_\mu,Z_\nu]=0, 
\end{aligned}
\right.
\end{equation}
where $Z=(Z_1,\ldots,Z_d)$. System~(\ref{YMequation3}) is an infinite-dimensional analogue of  system~(\ref{chiral_field2}).

 Conservation laws for the 3-dimensional Yang-Mills theory were considered in Ref.~\cite{AV1981}  (see also Refs.~\cite{Polyakov1979,Polyakov1980}). In particular, it was shown that  $B^A$ is the first non-trivial conserved current if the connection $A$ is a solution to the Yang-Mills equations. The conservation of the current was understood  in the sense that the value of the functional divergence on this current is zero. Thus, Theorem~\ref{thmdiv} is a generalization of this result.
\end{remark}

\section{The Yang-Mills-Higgs equations}

Let $V$ be a finite-dimensional vector space.
For any  $h\in \mathbb{R}^d$ consider  a $V$-valued linear functional   $S_h$ defined in the following way. The domain $dom S_h$ of $S_h$ consists of all $T\in L(E,V)$ 
having the following form 
$$
T(u)=\int_0^1 (u(t),\nu_T(dt))_{\mathbb{R}^d},\,u\in E,
$$
where $\nu_T$ is a $V\times(\mathbb{R}^{d})$-valued Borel measure.
The functional  $S_h$ acts on $T \in dom S_h$ as
$$S_h(T)=(h,\nu_T(\{1\}))_{\mathbb{R}^d}.$$

\begin{remark}
\label{rkf}
Let $E=W^{1,2}_{0}([0,1],\mathbb{R}^d)$.
In terms of the orthonormal basis $\{f_n\}$   of  $W^{1,2}_0([0,1],\mathbb{R})$ (see Remark~\ref{nonclassical})  the operator  $S_h$ can be   defined in the following way. 
For any $h\in R^d$ consider  a linear operator   $S^{\{f_n\}}_h\colon dom S^{\{f_n\}}_h \to V$  defined by
\begin{equation}
\label{dsh}
S_h^{\{f_n\}}(T)=T(hf_0)+\sum_{n=1}^\infty\sqrt{2}(-1)^nT(hf_n),
\end{equation}
where  the domain $dom S^{\{f_n\}}_h$ of $S^{\{f_n\}}_h$ consists of those  $T\in L(E,V)$ for which the right side of~(\ref{dsh}) exists. Then $S^{\{f_n\}}_h$ coincides with $S_h$ on  $dom S_h$.
\end{remark}

\begin{definition}
Choose $X=E$, $Z=Y=V$.  % $\mathcal{H}=W^{2,1}_{0,0}([0,1],\mathbb{R}^d)$, 
The endpoint derivation $D_h$  in the direction  $h\in \mathbb{R}^d$ is the differential operator $D^1_{S_h}$. (The symbol $D_\mu$   denotes $D_{p_\mu}$.)
\end{definition}

In this and the next sections we assume that the Greek indices run from 0 to 3.
We assume the  basis $\{p_0,p_1,p_2,p_3\}$ is given  in the Minkowski space $\mathbb{R}^{1,3}$ and the Minkowski metric $\eta$ is given by $diag\{1,-1,-1,-1\}$ in this basis. We will raise and lower indices using the Minkowski metric.
We assume that the connection is defined as a
  $su(N)$-valued $C^\infty$  smooth 1-form  $A_\mu(x)dx^\mu$ on $\mathbb{R}^{1,3}$. 
Below $$E\in\{W^{1,1}_0([0,1],\mathbb{R}^{1,3}),
W^{1,2}_0([0,1],\mathbb{R}^{1,3})\}.$$

The Yang-Mills-Higgs equations are the following equations on the  $C^\infty$-smooth $su(N)$-valued 1-form $A$ and $C^\infty$-smooth $su(N)$-valued field  $\phi$:
\begin{equation}
\label{higgsequation2}
 \left\{
\begin{aligned}
\nabla^\mu \nabla_\mu\phi(x)-(m^2-l\,tr(\phi^\ast(x)\phi(x)))\phi(x)=0 \\
\nabla^\mu F_{\mu\nu}(x)-[\phi(x),\nabla_\nu\phi(x)]=0,
\end{aligned}
\right.
\end{equation}
where $m,l\geq0$.

\begin{remark}
System of the differential equations~(\ref{higgsequation2}) is the Euler-Lagrange equations for the Lagrangian with  the Lagrangian density:
\begin{multline*}
\mathcal{L}(A_\nu,\partial_\mu A_\nu,\phi,\partial_\mu \phi)=\frac 14tr(F_{\mu\nu}^\ast F^{\mu\nu})+\frac 12tr((\nabla_\mu\phi)^\ast\nabla^\mu\phi)+\\+\frac{m^2}2tr(\phi^\ast\phi)-\frac l4(tr(\phi^\ast\phi))^2.
\end{multline*}
\end{remark}

 In terms of the parallel transport and the L\'evy d'Alembertian solutions to the Yang-Mills-Higgs equations can be described as follows.

  Let the function  $\Phi^{A,\phi}\colon E \to su(N)$  be defined by
$$
\Phi^{A,\phi}(\sigma)=U^A_{0,1}(\sigma)\phi(\sigma(1))U^A_{1,0}(\sigma),\sigma\in E.
$$

\begin{theorem}
\label{YMH1}
A pair $(A,\phi)$ is a  solution to system~(\ref{higgsequation2}) if and only if
for the parallel transport $U^A_{1,0}$ and the function $\Phi^{A,\phi}$
the following relations hold
\begin{equation}
\label{higgsequation3}
 \left\{
\begin{aligned}
\eta^{\mu\nu}D_\mu D_\nu\Phi^{A,\phi}(\sigma)-(m^2-l\,\text{tr}((\Phi^{A,\phi})^\ast(\sigma)\Phi^{A,\phi}(\sigma)))\Phi^{A,\phi}(\sigma)=0 \\
\Box_{L}U^A_{1,0}(\sigma)+U^A_{1,0}(\sigma)\int_0^1[\Phi^{A,\phi}(\sigma^r),D_\nu\Phi^{A,\phi}(\sigma^r)]\dot{\sigma}^\nu(r)dr=0.
\end{aligned}
\right.
\end{equation}
\end{theorem}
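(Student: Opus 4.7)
My plan is to reduce each line of (\ref{higgsequation3}) to a pointwise Yang-Mills-Higgs identity along $\sigma$, using explicit formulas for $D_\mu \Phi^{A,\phi}$ and for $\Box_L U^A_{1,0}$, and then to extract (\ref{higgsequation2}) from the integrated identity by the rescaling trick from the proof of Theorem~\ref{LLYM}. The converse direction is immediate from the same integral representations.

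First, I compute the endpoint derivations. The formula for $(U^A_{1,0})'(\sigma)$ in the Proposition preceding Proposition~\ref{AGVpr1} decomposes the differential of $U^A_{1,0}$ into an absolutely continuous Volterra part on $[0,1)$ and a Dirac mass at $\{1\}$ with coefficient $-A_\mu(\sigma(1))U^A_{1,0}(\sigma)$. Differentiating $\Phi^{A,\phi}(\sigma)=U^A_{0,1}(\sigma)\phi(\sigma(1))U^A_{1,0}(\sigma)$ by the product rule and extracting the atom at $\{1\}$ (which is exactly what $S_{p_\mu}$ detects), the three boundary contributions combine, via $\partial_\mu\phi+[A_\mu,\phi]=\nabla_\mu\phi$, to
$$D_\mu\Phi^{A,\phi}(\sigma)=U^A_{0,1}(\sigma)\nabla_\mu\phi(\sigma(1))U^A_{1,0}(\sigma).$$
Iterating gives $D_\mu D_\nu\Phi^{A,\phi}(\sigma)=U^A_{0,1}(\sigma)\nabla_\mu\nabla_\nu\phi(\sigma(1))U^A_{1,0}(\sigma)$, and cyclicity of the trace together with unitarity of parallel transport yields $\mathrm{tr}((\Phi^{A,\phi})^\ast\Phi^{A,\phi})(\sigma)=\mathrm{tr}(\phi^\ast\phi)(\sigma(1))$. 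Consequently the first line of (\ref{higgsequation3}) at $\sigma$ is the first line of (\ref{higgsequation2}) at $x=\sigma(1)$, conjugated by $U^A_{1,0}(\sigma)$; since every $x\in\mathbb{R}^{1,3}$ is reachable as $\sigma(1)$, the two are equivalent.

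Second, I apply Theorem~\ref{AGV1} with $g=\eta$ to rewrite $\Box_L U^A_{1,0}(\sigma)$ as the integral $-\int_0^1 U^A_{1,t}(\sigma)\eta^{\mu\nu}\nabla_\mu F_{\nu\lambda}(\sigma(t))\dot\sigma^\lambda(t)U^A_{t,0}(\sigma)\,dt$. Using $\sigma^r(1)=\sigma(r)$ and $U^A_{1,0}(\sigma^r)=U^A_{r,0}(\sigma)$, the first step gives
$$[\Phi^{A,\phi}(\sigma^r),D_\nu\Phi^{A,\phi}(\sigma^r)]=U^A_{0,r}(\sigma)[\phi(\sigma(r)),\nabla_\nu\phi(\sigma(r))]U^A_{r,0}(\sigma),$$
and combining with $U^A_{1,0}(\sigma)U^A_{0,r}(\sigma)=U^A_{1,r}(\sigma)$ turns the second line of (\ref{higgsequation3}) into
$$\int_0^1 U^A_{1,r}(\sigma)\Bigl(-\eta^{\mu\nu}\nabla_\mu F_{\nu\lambda}(\sigma(r))+[\phi(\sigma(r)),\nabla_\lambda\phi(\sigma(r))]\Bigr)\dot\sigma^\lambda(r)U^A_{r,0}(\sigma)\,dr=0.$$
To localize this I repeat the argument from the proof of Theorem~\ref{LLYM}: write the same identity for $\sigma^r$ in place of $\sigma$, conjugate by $U^A_{1,r}(\sigma)$, differentiate in $r$, and choose $\sigma\in E\cap C^1([0,1],\mathbb{R}^{1,3})$ so that $(\sigma(r),\dot\sigma(r))$ ranges over all of $\mathbb{R}^{1,3}\times\mathbb{R}^{1,3}$; this delivers the pointwise identity $\eta^{\mu\nu}\nabla_\mu F_{\nu\lambda}=[\phi,\nabla_\lambda\phi]$, i.e.\ the second Yang-Mills-Higgs equation. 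The converse implication is obvious: if $(A,\phi)$ solves (\ref{higgsequation2}), both integrals vanish and (\ref{higgsequation3}) holds. The main bookkeeping obstacle is the identification of the boundary contributions in step one with $\nabla_\mu\phi$; once this and its iteration are in place, everything else reduces to Theorem~\ref{AGV1} and to the rescaling localization of Theorem~\ref{LLYM}.
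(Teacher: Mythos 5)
Your proof is correct and follows essentially the same route as the paper: the endpoint derivations conjugate the Higgs equation by the parallel transport, and the second line reduces via Theorem~\ref{AGV1} to the Yang--Mills equation with current $j_\nu=[\phi,\nabla_\nu\phi]$, localized by the rescaling argument. The only difference is that the paper delegates that localization to Theorem~\ref{LLYM}, whereas you re-derive it explicitly.
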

\begin{proof}
Since
$$
d_u\Phi^{A,\phi}(\sigma)
=[B^A(\sigma)(u),\Phi^{A,\phi}(\sigma)]+U^A_{0,1}(\sigma)\nabla_\mu\phi(\sigma(1))u^\mu(1)U^A_{1,0}(\sigma),
$$
we have
\begin{equation}
\label{Higgs1}
D_\nu\Phi^{A,\phi}(\sigma)=U^{A}_{0,1}(\sigma)\nabla_\nu\phi(\sigma(1))U^A_{1,0}(\sigma).
\end{equation}
Since
$$
d_uD_\nu\Phi^{A,\phi}(\sigma)
=[B^A(\sigma)(u),D_\nu\Phi^{A,\phi}(\sigma)]+U^A_{0,1}(\sigma)\nabla_\mu\nabla_\nu\phi(\sigma(1))u^\mu(1)U^A_{1,0}(\sigma),
$$
we obtain
\begin{equation}
\label{Higgs2}
D_\mu D_\nu\Phi^{A,\phi}(\sigma)=U^{A}_{0,1}(\sigma)\nabla_\mu\nabla_\nu\phi(\sigma(1))U^A_{1,0}(\sigma).
\end{equation}
Relations~(\ref{Higgs1})
and~(\ref{Higgs2}) and
\begin{multline*}
(m^2-l \text{tr}((\Phi^{A,\phi})^\ast(\sigma)\Phi^{A,\phi}(\sigma)))\Phi^{A,\phi}(\sigma)=\\
=U_{0,1}^{A}(\sigma)(m^2-l\, \text{tr}(\phi^\ast(\sigma(1))\phi(\sigma(1)))\phi(\sigma(1))U^{A}_{1,0}(\sigma)
\end{multline*}
imply that the pair $(A,\phi)$ satisfies the first equation of system~(\ref{higgsequation2}) if and only if the pair  $(U_{1,0},\Phi)$  satisfies the first equation of system~(\ref{higgsequation3}).
Since
\begin{multline*}
\int_0^1 U^A_{1,r}(\sigma)[\phi(\sigma(r)),\nabla_\nu\phi(\sigma(r))]\dot{\sigma}^\nu(r)U^A_{r,0}(\sigma)dr=\\
=U^A_{1,0}(\sigma)\int_0^1[\Phi^{A,\phi}(\sigma^r),D_\nu\Phi^{A,\phi}(\sigma^r)]\dot{\sigma}^\nu(r)dr,
\end{multline*}
the pair  $(A,\phi)$ satisfies the second equation of  system~(\ref{higgsequation2}) if and only if the pair  $(U^A_{1,0},\Phi^{A,\phi})$ satisfies the second equation of  system~(\ref{higgsequation3}). 
\end{proof}

Solutions to the Yang-Mills-Higgs equations can be described in terms of the L\'evy divergence  as follows.

\begin{theorem}
\label{YMH2}
Let  $B\in C^\infty(E,L(E_0,su(N))$  can be represented as~(\ref{Bx}).
Then a pair $(B,\Phi)$, where $\Phi\in C^\infty(E,su(N))$,
is a solution to the system
\begin{equation}
\label{ymhsistem2}
 \left\{
\begin{aligned}
\partial_uB(\sigma)v-\partial_vB(\sigma)u+[B(\sigma)u,B(\sigma)v], \text{if $u,v\in E_0$} \\
div^\eta_LB(\sigma)+\int_0^1[\Phi(\sigma^r),D_\nu\Phi(\sigma^r)]\dot{\sigma}^\nu(r)dr=0\\
\eta^{\mu\nu}D_\mu D_\nu\Phi(\sigma)-(m^2-l \text{tr}(\Phi^\ast(\sigma)\Phi(\sigma)))\Phi(\sigma)=0 \\
\partial_u\Phi(\sigma)+[B(\sigma)u,\Phi(\sigma)]=0, \text{if $u\in E_0$}
\end{aligned}
\right.
\end{equation}
where $m,l\geq0$, if and only if there is  a pair $(A,\phi)$,
  where $A$ is a $su(N)$-valued $C^\infty$-smooth 1-form and $\phi\in C^\infty(\mathbb{R}^{1,3},su(N))$, which is a solution  of the Yang-Mills-Higgs equations~(\ref{higgsequation2}) such that
 $B=B^A$ and $\Phi=\Phi^{A,\phi}$.
%$$
%\Phi(\sigma)=U^{A}_{0,1}(\sigma)\phi(\sigma(1))U^A_{1,0}(\sigma).
%$$
\end{theorem}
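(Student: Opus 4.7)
The plan is to reduce Theorem~\ref{YMH2} to Theorem~\ref{YMH1} by passing from the infinite-dimensional data $(B,\Phi)$ to a pair $(A,\phi)$ of finite-dimensional fields: the non-commutative Poincar\'e lemma (Theorem~\ref{GrossThm}) produces the connection $A$, and the last equation of~(\ref{ymhsistem2}) produces the Higgs field $\phi$. For the nontrivial direction, starting from a pair $(B,\Phi)$ solving~(\ref{ymhsistem2}), I would first apply Theorem~\ref{GrossThm} to the closedness condition (the first equation of~(\ref{ymhsistem2})) to obtain a $C^\infty$-smooth $su(N)$-valued connection $A$ with $B=B^A$. To extract $\phi$, I would then consider the conjugated field $\tilde\Phi(\sigma)=U^A_{1,0}(\sigma)\Phi(\sigma)U^A_{0,1}(\sigma)$ on $E$ and show by a direct product-rule computation, using $\partial_u U^A_{1,0}(\sigma)=U^A_{1,0}(\sigma)B^A(\sigma)u$ and $\partial_u U^A_{0,1}(\sigma)=-B^A(\sigma)u\,U^A_{0,1}(\sigma)$, that the fourth equation of~(\ref{ymhsistem2}) is equivalent to $\partial_u\tilde\Phi(\sigma)=0$ for every $u\in E_0$. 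Since variations in $E_0$ keep the endpoint $\sigma(1)$ fixed, $\tilde\Phi(\sigma)$ depends only on $\sigma(1)$ and descends to a function $\phi\in C^\infty(\mathbb{R}^{1,3},su(N))$; by construction $\Phi=\Phi^{A,\phi}$.

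With this identification in place, the third equation of~(\ref{ymhsistem2}) is exactly the first equation of~(\ref{higgsequation3}), so Theorem~\ref{YMH1} yields the first Yang-Mills-Higgs equation in~(\ref{higgsequation2}). For the second Yang-Mills-Higgs equation, I would combine Proposition~\ref{divL}, which rewrites $div^\eta_LB^A(\sigma)$ as the path-ordered integral of $-\eta^{\lambda\mu}\nabla_\lambda F_{\mu\nu}\dot{\sigma}^\nu$ conjugated by parallel transport, with the identity $[\Phi^{A,\phi}(\sigma^r),D_\nu\Phi^{A,\phi}(\sigma^r)]=U^A_{0,r}(\sigma)[\phi(\sigma(r)),\nabla_\nu\phi(\sigma(r))]U^A_{r,0}(\sigma)$ that follows directly from~(\ref{Higgs1}). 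The second equation of~(\ref{ymhsistem2}) then becomes the vanishing of a single path-ordered integral along every $\sigma\in E$, from which the pointwise identity $\eta^{\lambda\mu}\nabla_\lambda F_{\mu\nu}=[\phi,\nabla_\nu\phi]$ follows by the same localization trick used in the proof of Theorem~\ref{LLYM}: replace $\sigma$ by $\sigma^r$, differentiate in $r$, and specialize to $\sigma\in E\cap C^1([0,1],\mathbb{R}^{1,3})$ with arbitrary $\sigma(r)$ and $\dot\sigma(r)$.

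The reverse implication is routine verification: with $B=B^A$ and $\Phi=\Phi^{A,\phi}$, closedness of $B^A$ follows from the symmetry of mixed second derivatives of $U^A_{1,0}$, the fourth equation is a one-line calculation using $u(1)=0$ for $u\in E_0$, and the third and second equations are the content of Theorem~\ref{YMH1} and Proposition~\ref{divL} respectively. The main obstacle I anticipate is the smoothness check in the descent step, namely verifying that $\tilde\Phi$, which is a priori only defined on $E$, yields a genuinely $C^\infty$ function on $\mathbb{R}^{1,3}$; this should be handled by selecting, for each $x\in\mathbb{R}^{1,3}$, the linear path $\sigma_x(t)=tx$ and observing that $x\mapsto\tilde\Phi(\sigma_x)$ is smooth because $\Phi$ and $U^A_{1,0}$ depend smoothly on their arguments.
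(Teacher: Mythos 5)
Your proposal follows essentially the same route as the paper's proof: Theorem~\ref{GrossThm} applied to the closedness condition produces $A$, the conjugated field $U^A_{1,0}(\sigma)\Phi(\sigma)U^A_{0,1}(\sigma)$ is shown via the last equation of~(\ref{ymhsistem2}) to be constant under endpoint-fixing variations and so descends to $\phi$ with $\Phi=\Phi^{A,\phi}$, and Proposition~\ref{divL} converts the $div^\eta_L$ equation into the $\Box_L$ equation so that Theorem~\ref{YMH1} applies. Your explicit localization step for the second Yang--Mills--Higgs equation is just an unpacking of what Theorem~\ref{YMH1} already provides, so there is no substantive divergence from the paper's argument.
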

\begin{proof}
Let  the pair  $(B,\Phi)$ be a solution to  system~(\ref{ymhsistem2}). Due to  Theorem~\ref{GrossThm} 
the second equation of  the system implies that there is a connection $A$ such that $B=B^A$. Consider the function  $\Phi_1\colon  E\to su(N)$ defined by the formula $\Phi_1(\sigma)=U^A_{1,0}(\sigma)\Phi(\sigma)U^A_{0,1}(\sigma)$. Then  $\Phi_1(\sigma)\in C^\infty(E,su(N))$. Due to the third equation of  system~(\ref{ymhsistem2}), if $u\in E_0$, then
$$
\partial_u\Phi_1(\sigma)=U^A_{1,0}(\sigma)(\partial_u\Phi(\sigma)+[B(\sigma)u,\Phi(\sigma)])U^A_{0,1}(\sigma)=0.
$$
This implies that the field   $\phi\colon \mathbb{R}^{1,3}\to su(N)$ is 
well defined by $\phi(x)=\Phi_1(\sigma_x)$, where $\sigma_x$ is an arbitrary curve from $E$ with $\sigma_x(1)=x$.  It is valid that $\Phi=\Phi^{A,\phi}$.  Since $\Phi$, $U^A_{1,0}$ and $U^A_{0,1}$  are  $C^\infty$-smooth functions,
 $\phi \in C^\infty(\mathbb{R}^{1,3},su(N))$. 
Note that the equality
$$
div^\eta_LB^A(\sigma)+\int_0^1[\Phi^{A,\phi}(\sigma^r),D_\nu\Phi^{A,\phi}(\sigma^r)]\dot{\sigma}^\nu(r)dr=0
$$
holds if and only if the equality
$$
\Box_{L}U^A_{1,0}(\sigma)+U^A_{1,0}(\sigma)\int_0^1[\Phi^{A,\phi}(\sigma^r),D_\nu\Phi^{A,\phi}(\sigma^r)]\dot{\sigma}^\nu(r)dr=0
$$
holds.
Then we can apply Theorem~\ref{YMH1}. It follows  that $(A,\phi)$ is a solution to the Yang-Mills-Higgs equations.

If the pair  $(A,\phi)$ is a solution to the Yang-Mills-Higgs equations, it can be checked by direct calculations that the pair $(B^A,\Phi^{A,\phi})$ is a solution to  system~(\ref{ymhsistem2}). 
\end{proof}

\begin{remark}
Let $(A,\phi)$ and $(A',\phi')$ be such that $B^{A}=B^{A'}$ and $\Phi^{A,\phi}=\Phi^{A',\phi'}$. Then there exists a unique function  $b\in C^\infty(\mathbb{R}^{1,3},SU(N))$ with $b(0)=I_N$ such that $A(x)=b^{-1}(x)A'(x)b(x)+b^{-1}(x)db(x)$ and $\phi(x)=b^{-1}(x)\phi'(x)b(x)$.
\end{remark}

\section{The Yang-Mills-Dirac equations}
Let  $\{g_\alpha\}_{\alpha=1}^4$ be an orthonormal basis in $\mathbb{C}^4$.
  If $\varphi\in \mathbb{C}^N\otimes\mathbb{C}^4$, the symbols $\varphi_\alpha$ ($\alpha=1,\ldots,4$)
  denote such vectors from $\mathbb{C}^N$ that
  $\varphi=\sum_{\alpha=1}^4\varphi_\alpha\otimes g_\alpha$.
  Let $\gamma^\mu$  be the Dirac matrices (we assume
  $\gamma^\mu\in M_4(\mathbb{C})$ and
 $\gamma^\mu\gamma^\nu+\gamma^\nu\gamma^\mu=2\eta^{\mu\nu}I_4$, $\gamma_0^\ast=\gamma_0$, $\gamma_\mu^*=-\gamma_\mu$).
  The symbol $\overline{\varphi}\gamma_\mu\varphi$ denotes the operator acting in $\mathbb{C}^N$ defined by $\overline{\varphi}\gamma_\mu\varphi=\sum_{\alpha=1}^4
((I_N\otimes\gamma_0\gamma_\mu)\varphi)_\alpha\otimes\varphi_\alpha^\ast.$ 
I. e.,
$$
(\sum_{\alpha=1}^4
((I_N\otimes\gamma_0\gamma_\mu)\varphi)_\alpha\otimes\varphi_\alpha^\ast)\xi=
\sum_{\alpha=1}^4(\xi,\varphi_\alpha)((I_N\otimes\gamma_0\gamma_\mu)\varphi)_\alpha,\text{if $\xi\in\mathbb{C}^N$}.
$$
For all $\varphi\in \mathbb{C}^N\otimes\mathbb{C}^4$ it holds that $i\overline{\varphi}\gamma_\mu\varphi \in u(N)$.
The symbol $pr_{su(N)}$ denotes the orthogonal projection in  $u(N)$ on  $su(N)$.

The QCD equations  (the Yang-Mills-Dirac equations)  are the following differential equations on a $C^\infty$-smooth $su(N)$-valued 1-form $A$ and a $C^\infty$-smooth $\mathbb{C}^N\otimes\mathbb{C}^4$-valued function  $\psi$ on $\mathbb{R}^{1,3}$:
\begin{equation}
\label{qcdequation}
 \left\{
\begin{aligned}
(I_N\otimes\gamma^\mu)(\partial_\mu+A_\mu\otimes I_4)\psi+i m\psi=0\\
\nabla^\mu F_{\mu\nu}=-pr_{su(N)}(i(\overline{\psi}\gamma_\nu\psi))
\end{aligned}
\right.
\end{equation}
 \begin{remark}
System of differential equations~(\ref{qcdequation})  is the Euler-Lagrange equations for the Lagrangian with
the Lagrangian density
$$
\mathcal{L}(A_\nu,\partial_\mu A_\nu,\psi,\partial_\mu \psi)=\frac 14 tr(F_{\mu\nu}^\ast F^{\mu\nu})+\overline{\psi}(i(I_N\otimes\gamma^\mu)((\partial_\mu+A_{\mu}\otimes I_4)-m)\psi),
$$
where $\overline{\psi}=\psi^{*}(I_N\otimes\gamma_0)$.
\end{remark}

In terms of the parallel transport and the  L\'evy d'Alembertian the 
solutions to the Yang-Mills-Dirac equations can be described in the  following way.

 The function $\Psi^{A,\phi}\colon E\to\mathbb{C}^N\otimes\mathbb{C}^4$ is defined by the formula
$$
\Psi^{A,\psi}(\sigma)= (U^A_{0,1}(\sigma)\otimes I_4)\psi(\sigma(1)),\sigma\in E.
$$

\begin{theorem}
\label{thmQCD1}
A pair  $(A,\psi)$, where $\psi \in C^\infty(\mathbb{R}^{1,3},\mathbb{C}^N\otimes\mathbb{C}^4)$, is a solution to the Yang-Mills-Dirac equations~(\ref{qcdequation}) if and  only if for the  parallel transport  $U^A_{1,0}$ and the function  $\Psi^{A,\psi}$
the following system holds 
\begin{equation}
\label{qcdequation2}
 \left\{
\begin{aligned}
(I_N\otimes\gamma^\mu) D_\mu\Psi^{A,\psi}+i m\Psi^{A,\psi}=0\\
\Box_{L}U^A_{1,0}(\sigma)=U^A_{1,0}(\sigma)pr_{su(N)}(i\int_0^1\overline{\Psi^{A,\psi}(\sigma^r)}
\gamma_\nu\Psi^{A,\psi}(\sigma^r)\dot{\sigma}^\nu(r)dr).
\end{aligned}
\right.
\end{equation}
\end{theorem}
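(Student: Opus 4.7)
The plan is to mirror the proof of Theorem~\ref{YMH1}, handling the two equations of~(\ref{qcdequation}) separately; each will be shown to be equivalent to the corresponding equation of~(\ref{qcdequation2}) after intertwining by $U^A_{0,1}(\sigma)\otimes I_4$ (for the Dirac line) and by conjugation by $U^A_{1,0}(\sigma)$ (for the sourced Yang--Mills line). The new ingredients beyond the YMH argument are an endpoint-derivation computation for $\Psi^{A,\psi}$ and a conjugation identity for the bilinear $\overline{\varphi}\gamma_\nu\varphi$.

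For the first equation, I would first compute $D_\mu U^A_{0,1}(\sigma)=U^A_{0,1}(\sigma)A_\mu(\sigma(1))$: applying $S_{p_\mu}$ to the formula for $(U^A_{1,0})'(\sigma)$ from the proposition preceding Proposition~\ref{AGVpr1} and using $(U^A_{0,1})'=-U^A_{0,1}(U^A_{1,0})'U^A_{0,1}$, the Volterra integral term in $t$ has an absolutely continuous vector measure and hence contributes nothing on $\{1\}$, while only the endpoint term survives. By the Leibniz rule for $D_\mu$ applied to $\Psi^{A,\psi}(\sigma)=(U^A_{0,1}(\sigma)\otimes I_4)\psi(\sigma(1))$ one obtains
\begin{equation*}
D_\mu\Psi^{A,\psi}(\sigma)=(U^A_{0,1}(\sigma)\otimes I_4)(\partial_\mu+A_\mu\otimes I_4)\psi(\sigma(1)).
\end{equation*}
Hence $(I_N\otimes\gamma^\mu)D_\mu\Psi^{A,\psi}(\sigma)+im\Psi^{A,\psi}(\sigma)$ equals $(U^A_{0,1}(\sigma)\otimes I_4)$ applied to the Dirac left-hand side at $\sigma(1)$. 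Since every $x\in\mathbb{R}^{1,3}$ is of the form $\sigma(1)$ and $U^A_{0,1}(\sigma)\otimes I_4$ is invertible, the first line of~(\ref{qcdequation2}) is equivalent to the first line of~(\ref{qcdequation}).

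For the second equation, I would apply Theorem~\ref{AGV1} with $g=\eta$ to rewrite $\Box_L U^A_{1,0}(\sigma)$ as an integral against $\eta^{\lambda\mu}\nabla_\lambda F_{\mu\nu}(\sigma(t))\dot\sigma^\nu(t)$. The key algebraic identity to verify is
\begin{equation*}
\overline{(U\otimes I_4)\psi}\,\gamma_\nu\,(U\otimes I_4)\psi=U(\overline{\psi}\gamma_\nu\psi)U^{-1},\qquad U\in U(N),
\end{equation*}
which follows by unpacking the definition of $\overline{\varphi}\gamma_\nu\varphi$ and using $(I_N\otimes\gamma_0\gamma_\nu)(U\otimes I_4)=U\otimes\gamma_0\gamma_\nu$. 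Combined with $\Psi^{A,\psi}(\sigma^r)=(U^A_{0,r}(\sigma)\otimes I_4)\psi(\sigma(r))$ and the fact that the orthogonal projection $pr_{su(N)}$ commutes with conjugation by elements of $SU(N)$ (because such conjugation preserves both $u(N)$ and the trace), the right-hand side of the second line of~(\ref{qcdequation2}) becomes
\begin{equation*}
\int_0^1 U^A_{1,r}(\sigma)\,pr_{su(N)}\bigl(i(\overline{\psi}\gamma_\nu\psi)(\sigma(r))\bigr)U^A_{r,0}(\sigma)\,\dot\sigma^\nu(r)\,dr.
\end{equation*}
Matching this expression with the formula furnished by Theorem~\ref{AGV1} and invoking Theorem~\ref{LLYM} with current $j_\nu=-pr_{su(N)}(i\overline{\psi}\gamma_\nu\psi)$ yields the equivalence with the second equation of~(\ref{qcdequation}); the converse direction is immediate from the same chain of identities read in reverse.

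The main obstacle is the conjugation identity for $\overline{\varphi}\gamma_\nu\varphi$ together with its compatibility with $pr_{su(N)}$: the notation for $\overline{\varphi}\gamma_\nu\varphi$ (a rank-one-style sum in the $\mathbb{C}^4$-index) is idiosyncratic, so the check, though routine, must be carried out carefully component by component. Once this identity and the projection compatibility are in place, the rest of the argument is a direct adaptation of the YMH proof, and no additional analytic input beyond Theorems~\ref{AGV1} and~\ref{LLYM} is required.
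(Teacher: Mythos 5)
Your proposal is correct and follows essentially the same route as the paper's proof: the same splitting into the Dirac line (via the endpoint-derivative formula $D_\mu\Psi^{A,\psi}=(U^A_{0,1}\otimes I_4)(\partial_\mu+A_\mu\otimes I_4)\psi(\sigma(1))$, which the paper states as a "direct calculation") and the sourced Yang--Mills line (via Theorem~\ref{LLYM} with $j_\nu=-pr_{su(N)}(i\overline{\psi}\gamma_\nu\psi)$ and the conjugation identity $pr_{su(N)}(i\overline{(b\otimes I_4)\varphi}\gamma_\nu(b\otimes I_4)\varphi)=pr_{su(N)}(ib\,\overline{\varphi}\gamma_\nu\varphi\,b^{-1})$, which the paper cites from Gross and you verify directly). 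The only difference is that you supply slightly more detail at the two places the paper compresses, so no substantive comparison is needed.
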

\begin{proof}
We obtain by direct calculation
$$
D_\mu\Psi^{A,\psi}(\sigma)=(U^{A}_{0,1}(\sigma)\otimes I_4)(\partial_\mu\psi(\sigma(1))+(A_\mu(\sigma(1))\otimes I_4)\psi(\sigma(1))).
$$
This implies the pair $(A,\psi)$ satisfies the first equation of system~(\ref{qcdequation}) if and only if the pair   $(U^A_{1,0},\Psi^{A,\psi})$ satisfies the first equation of  system~(\ref{qcdequation2}). Due to Theorem~\ref{LLYM}  the pair  $(A,\psi)$ is a solution to the second equation of  system~(\ref{qcdequation}) if and only if the following equality holds for all $\sigma\in E$
$$
\Box_{L}U_{1,0}^A(\sigma)=\int_0^1 U^A_{1,r}(\sigma)(pr_{su(N)}(i\overline{\psi(\sigma(r))}\gamma_\nu\psi (\sigma(r))\dot{\sigma}^\nu(r)))U^A_{r,0}(\sigma)dr.
$$
For any  $b\in SU(N)$  and any $\varphi\in \mathbb{C}^N$ the following equality holds (see for example Ref.~\cite{Gross})
$$
pr_{su(N)}(i\overline{(b\otimes I_4)\varphi}\gamma_\mu (b\otimes I_4)\varphi)=pr_{su(N)}(ib\overline{\varphi}\gamma_\mu \varphi b^{-1}).
$$
Hence,
\begin{multline*}
U^A_{1,0}(\sigma)\int_0^1 U^{A}_{0,r}(\sigma)pr_{su(N)}(i\overline{\psi(\sigma(r))}\gamma_\nu\psi (\sigma(r))\dot{\sigma}^\nu(r)U^A_{r,0}(\sigma)dr=\\
=U^A_{1,0}(\sigma)pr_{su(N)}(i\int_0^1\overline{\Psi^{A,\psi}(\sigma^r)}
\gamma_\nu\Psi^{A,\psi}(\sigma^r)\dot{\sigma}^\nu(r)dr).
\end{multline*}
This implies that the pair $(A,\psi)$ satisfies the second equation of system~(\ref{qcdequation}) if and only if the pair   $(U^A_{1,0},\Psi^{A,\psi})$ satisfies the second equation of  system~(\ref{qcdequation2}).
\end{proof}

In terms of the   L\'evy divergence the 
solutions to the Yang-Mills-Dirac equations can be described in the  following way.
\begin{theorem}
\label{thmQCD2}
Let  $B\in C^\infty(E,L(E_0,su(N))$ can be represented as~(\ref{Bx}).
Then a pair $(B,\Psi)$, where  $\Psi\in C^\infty(E,\mathbb{C}^N\otimes\mathbb{C}^4)$,
is a  solution to the system of the equations:
\begin{equation}
\label{qcdequation3}
 \left\{
\begin{aligned}
div^\eta_LB(\sigma)=pr_{su(N)}(i\int_0^1\overline{\Psi(\sigma^r)}
\gamma_\nu\Psi(\sigma^r)\dot{\sigma}^\nu(r)dr)\\
\partial_uB(\sigma)v-\partial_vB(\sigma)u+[B(\sigma)u,B(\sigma)v],\text{if $u,v\in E_0$} \\
\partial_u\Psi(\sigma)+(B(\sigma)u)\Psi(\sigma)=0,\text{if $u\in E_0$}\\
(I_N\otimes\gamma^\mu) D_\mu\Psi+i m\Psi=0,
\end{aligned}
\right.
\end{equation}
if and only if 
there is a pair $(A,\psi)$ which is a solution to the  Yang-Mills-Dirac equations~(\ref{qcdequation}) such that $B=B^A$ and $\Psi=\Psi^{A,\psi}$.  
\end{theorem}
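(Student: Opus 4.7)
The plan is to mimic the proof of Theorem~\ref{YMH2}, substituting the Dirac-type coupling for the Higgs-type coupling, and appeal to Theorem~\ref{thmQCD1} at the end to pass between the parallel-transport formulation and the base-space formulation.

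First, assume the pair $(B,\Psi)$ solves system~(\ref{qcdequation3}). The second equation is precisely the closedness hypothesis of Theorem~\ref{GrossThm}, so there exists a $C^\infty$-smooth $su(N)$-valued connection 1-form $A$ on $\mathbb{R}^{1,3}$ such that $B=B^A$; moreover the representability of $B$ as in~(\ref{Bx}) guarantees Gross's lemma applies verbatim. Next I use the third equation to build $\psi$: define $\Psi_1\colon E\to \mathbb{C}^N\otimes\mathbb{C}^4$ by $\Psi_1(\sigma)=(U^A_{1,0}(\sigma)\otimes I_4)\Psi(\sigma)$. For $u\in E_0$ one computes
\[
\partial_u\Psi_1(\sigma)=(U^A_{1,0}(\sigma)\otimes I_4)\bigl(\partial_u\Psi(\sigma)+(B^A(\sigma)u\otimes I_4)\Psi(\sigma)\bigr)=0,
\]
using that $\partial_u U^A_{1,0}(\sigma)=-U^A_{1,0}(\sigma)B^A(\sigma)u$ for $u\in E_0$ (which follows from the definition of $B^A$ as the right-logarithmic derivative of $U^A_{1,0}$). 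Hence $\Psi_1(\sigma)$ depends only on $\sigma(1)$, so the formula $\psi(x)=\Psi_1(\sigma_x)$ gives a well-defined function on $\mathbb{R}^{1,3}$, smooth because all ingredients are smooth and one can choose $\sigma_x$ to depend smoothly on $x$ (e.g.\ straight-line curves). By construction $\Psi=\Psi^{A,\psi}$.

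With $A$ and $\psi$ in hand, the first equation of~(\ref{qcdequation3}) is the second equation of~(\ref{qcdequation2}) since by Proposition~\ref{divL}
\[
div^\eta_L B^A(\sigma)=U^A_{0,1}(\sigma)\,\Box_L U^A_{1,0}(\sigma),
\]
and multiplying on the left by $U^A_{1,0}(\sigma)$ converts the divergence equation into the d'Alembertian equation with the right-hand side involving $\overline{\Psi^{A,\psi}}\gamma_\nu\Psi^{A,\psi}$. Similarly, the fourth equation of~(\ref{qcdequation3}) is identical to the first equation of~(\ref{qcdequation2}). Therefore Theorem~\ref{thmQCD1} shows that $(A,\psi)$ solves the Yang-Mills-Dirac equations~(\ref{qcdequation}).

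Conversely, if $(A,\psi)$ solves~(\ref{qcdequation}), then by Theorem~\ref{thmQCD1} the pair $(U^A_{1,0},\Psi^{A,\psi})$ satisfies~(\ref{qcdequation2}); the closedness of $B^A$ is classical (and is noted in the proof of Theorem~\ref{thmdiv}); the identity $\partial_u\Psi^{A,\psi}(\sigma)+(B^A(\sigma)u\otimes I_4)\Psi^{A,\psi}(\sigma)=0$ for $u\in E_0$ follows from differentiating $\Psi^{A,\psi}(\sigma)=(U^A_{0,1}(\sigma)\otimes I_4)\psi(\sigma(1))$ together with $u(1)=0$; and the divergence equation follows from the d'Alembertian equation via Proposition~\ref{divL}. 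The main obstacle is the well-definedness and smoothness of $\psi$ on $\mathbb{R}^{1,3}$ from the loop-space data; this is handled exactly as in Theorem~\ref{YMH2}, the key input being that the third equation of~(\ref{qcdequation3}) is the horizontality condition saying $\Psi_1$ is invariant under variations fixing the endpoint.
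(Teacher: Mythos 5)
Your proof is correct and follows essentially the same route as the paper's: Gross's Poincar\'e lemma recovers $A$ from the closedness equation, the horizontality equation shows $\Psi_1=(U^A_{1,0}\otimes I_4)\Psi$ depends only on the endpoint so that $\psi$ is well defined and smooth, Proposition~\ref{divL} converts the L\'evy divergence equation into the L\'evy d'Alembertian equation, and Theorem~\ref{thmQCD1} finishes both directions. The only slip is a sign in your parenthetical justification: since $B^A(\sigma)u=U^A_{0,1}(\sigma)\partial_uU^A_{1,0}(\sigma)$, one has $\partial_uU^A_{1,0}(\sigma)=+U^A_{1,0}(\sigma)B^A(\sigma)u$ for $u\in E_0$, which is in fact what your displayed computation of $\partial_u\Psi_1$ uses.
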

\begin{proof}
Let a pair  $(B,\Psi)$  be a solution to system~(\ref{qcdequation3}). Due to Theorem~\ref{GrossThm}  from the second equation of  system~(\ref{qcdequation3}) it follows that there is a connection  $A$ such that $B=B^A$. Consider the  function $\Psi_1\colon E\to su(N)$ defined by  $\Psi_1(\sigma)=(U^A_{1,0}(\sigma)\otimes I_4)\Psi(\sigma)$. Then $\Psi_1(\sigma)\in C^\infty(E,su(N))$. Due to the third equation of system~(\ref{ymhsistem2}), if $u\in E_0$, then
$$
\partial_u\Psi_1(\sigma)=U^A_{1,0}(\sigma)(\partial_u\Psi(\sigma)+(B^A(\sigma)u)\Psi(\sigma))=0.
$$
Then the field   $\psi\colon \mathbb{R}^{1,3}\to\mathbb{C}^N\otimes\mathbb{C}^4$ is well defined by $\psi(x)=\Psi_1(\sigma_x)$ for arbitrary $\sigma_x\in E$ with $\sigma_x(1)=x$.  It   holds that $\Psi=\Psi^{A,\psi}$. Since $\Psi$ and $U^A_{1,0}$ are  $C^\infty$-smooth functions,
 $\psi \in C^\infty(\mathbb{R}^{1,3},\mathbb{C}^N\otimes\mathbb{C}^4)$. 
Note that  the equality
$$
div^\eta_LB^A(\sigma)=pr_{su(N)}(i\int_0^1\overline{\Psi^{A,\psi}(\sigma^r)}
\gamma_\nu\Psi^{A,\psi}(\sigma^r)\dot{\sigma}^\nu(r)dr).
$$
holds if and only if the following is true
$$
\Box_{L}U^A_{1,0}(\sigma)=U^A_{1,0}(\sigma)pr_{su(N)}(i\int_0^1\overline{\Psi^{A,\psi}(\sigma^r)}
\gamma_\nu\Psi^{A,\psi}(\sigma^r)\dot{\sigma}^\nu(r)dr).
$$
Then we can apply Theorem~\ref{thmQCD1} and obtain that $(A,\psi)$ is a solution  of the Yang-Mills-Dirac equations.

If a pair  $(A,\psi)$ is a solution to the Yang-Mills-Dirac equations, it can be checked by direct calculations that the pair  $(B^A,\Psi^{A,\psi})$ is a solution to  system~(\ref{qcdequation3}). 
 
 \end{proof}

\begin{remark}
In Ref.~\cite{Gross}  systems of infinite-dimensional equations equivalent to the Yang-Mills-Higgs equations  and to the Yang-Mills-Dirac equations were obtained.
The third and the fourth equation of system~(\ref{qcdequation3}) and system~(\ref{ymhsistem2})  were derived in that work. 
\end{remark}

\begin{remark}
Let $(A,\psi)$ and $(A',\psi')$ be such that $B^A=B^{A'}$ and $\Psi^{A,\psi}=\Psi^{A',\psi'}$. Then there exists a unique function $g\in C^\infty(\mathbb{R}^{1,3},SU(N))$ with $b(0)=I_N$ such that $A(x)=b^{-1}(x)A'(x)b(x)+b^{-1}(x)db(x)$ and $\psi(x)=(b^{-1}(x)\otimes I_4)\psi'(x)$.
\end{remark}

\begin{remark}
Due to Remarks~\ref{nonclassical} and~\ref{rkf} all differential operators in systems~(\ref{qcdequation3}) and~(\ref{ymhsistem2}) can be defined by the basis $\{f_n\}$ in $W^{1,2}_0([0,1],\mathbb{R})$.
\end{remark}

\section*{Conclusion}
 
 We provided the classification of the Levy differential operators and obtained the equations contained these operators that are equivalent to the QCD equations and the Yang-Mills equations.
 It would be interesting  to investigate is it possible to find the  stochastic analogue of  Gross's noncommutative
 Poincare lemma.   With the help of such analogue it would be possible to develop the results of the paper Ref.~\cite{Volkov2017}, where the stochastic parallel transport was considered as a general chiral field. Also it would be interesting to investigate 
 the connection between the stochastic L\'evy Laplacians introduced in Ref.~\cite{LV2001} and in Ref.~\cite{Volkov2018}. 
 In the next paper we develop some results of the current paper for the case of the Riemannian and the pseudo-Riemannian manifold.

\section*{Acknowledgments}

The author thanks I.V. Volovich and O.G. Smolyanov for useful discussions.
%This work is supported by the Russian Science Foundation under grant 14-11-00687.

\end{document}